\newcommand{\cl}{{\cal L}}
\newcommand{\cx}{{\cal X}}
\newcommand{\bsx}{\boldsymbol{x}}
\newcommand{\bsy}{\boldsymbol{y}}
\newcommand{\bsz}{\boldsymbol{z}}
\newcommand{\real}{\mathbb{R}}
\newcommand{\tran}{\mathsf{T}} 
\newcommand{\var}{{\mathrm{Var}}}
\newcommand{\cov}{{\mathrm{Cov}}}
\newcommand{\wh}{\widehat}
\newcommand{\wt}{\widetilde}
\renewcommand{\emptyset}{\varnothing}
\renewcommand{\ge}{\geqslant}
\renewcommand{\le}{\leqslant}
\newcommand{\dnorm}{\mathcal{N}}
\newcommand{\dustd}{\mathbf{U}} 
\newcommand{\e}{{\mathbb{E}}} 
\newtheorem{theorem}{Theorem}
\newtheorem{lemma}{Lemma}
\newtheorem{corollary}{Corollary}
\theoremstyle{definition}
\newcommand{\ult}{\underline\tau}
\newcommand{\olt}{\bar\tau}
\newcommand{\simiid}{\stackrel{\mathrm{iid}}{\sim}}
\renewcommand{\cl}{\mathcal{L}}
\begin{document}

\title{Efficient estimation of the ANOVA mean dimension,
with an application to neural net classification}
\author{
Christopher R. Hoyt\\Stanford University
\and
Art B. Owen\\Stanford University
}
\date{December 2020}
\maketitle

\begin{abstract}
The mean dimension of a black box function of $d$
variables is a convenient
way to summarize the extent to which it is dominated by
high or low order interactions.  It is expressed in terms
of $2^d-1$ variance components but it can be written
as the sum of $d$ Sobol' indices that can be estimated
by leave one out methods.
We compare the variance of these leave one out methods:
a Gibbs sampler called winding stairs, a radial sampler
that changes each variable one at a time from a baseline,
and a naive sampler that never reuses function evaluations
and so costs about double the other methods.
For an additive function the radial and winding stairs methods are most efficient.
For a multiplicative function the naive method can easily
be most efficient if the factors have high kurtosis.
As an illustration we consider the mean dimension of
a neural network classifier of digits from the MNIST
data set. The classifier is a function of $784$ pixels.
For that problem, winding stairs is the best algorithm.
We find that inputs to the final softmax layer have mean dimensions
ranging from $1.35$ to $2.0$.
\end{abstract}

\par\noindent
{\bf Keywords:}
chaining, explainable AI,
global sensitivity analysis, pick-freeze, Sobol' indices, winding stairs

\section{Introduction}

The mean dimension of a square integrable
function quantifies
the extent to which higher order interactions among its
$d$ input variables are important.  At one
extreme, an additive function has mean dimension one
and this makes numerical tasks such
as optimization and integration much simpler.
It can also make it easier to compare the importance
of the inputs to a function and it simplifies some visualizations.
At the other extreme, a function that equals a $d$-fold
interaction has mean dimension $d$ and can be much
more difficult to study.

The mean dimension of a function can be expressed
as a certain sum of Sobol' indices which we introduce below.
There is an extensive literature on efficiently estimating
Sobol' indices
\citep{homm:salt:1996,
jans:1999,salt:2002,
mono:naud:mao:2006,
glen:isaa:2012,
jano:klei:lagn:node:prie:2014,
salt:anno:azzi:camp:ratt:tara:2010} and there
are additional references in \cite{puy:etal:2020}
which has a thorough empirical comparison of methods
for the total index, which is the one we use below.
In the case of mean dimension, the necessary
indices can be estimated numerically
by algorithms that change just one input variable at a time.
Two prominent strategies for this case are the winding stairs
estimator of \cite{jans:etal:1994} which runs a Gibbs
sampler over the input space and a radial strategy
of \cite{camp:salt:cri:2011}.

When estimating the mean dimension, a special consideration
arises.  Since it is a sum of $d$ Sobol' indices,
there are $O(d^2)$ covariances to consider
and they can greatly affect the
efficiency of the estimation strategy.  Sometimes a naive approach
that uses roughly twice as many function evaluations can
be more efficient than winding stairs because it eliminates
all of those covariances.

The outline of this paper is as follows.
Section~\ref{sec:notation} introduces some notation, and defines
the ANOVA decomposition, Sobol' indices and the mean dimension.
Section~\ref{sec:strategies} presents three strategies for sampling
pairs of input points that differ in just one component.  A naive
method takes $2Nd$ function evaluations to get $N$ such pairs
of points for each of $d$ input variables. It never reuses any
function values. A radial strategy
\citep{camp:salt:cri:2011}
uses $N(d+1)$ function evaluations
in which $N$ baseline points each get paired with $d$ other points
that change one of the inputs.  The third strategy is winding stairs
\citep{jans:etal:1994} which uses $Nd+1$ function evaluations.
Section~\ref{sec:addmult} compares
the variances of mean dimension estimates based on these strategies.
Those variances involve fourth moments of the original function.
We consider additive and multiplicative functions. 
For additive functions all three methods have the same variance
making the naive method inefficient by a factor of about $2$
for large $d$.
For some functions, methods that save function
evaluations by reusing some of them can introduce positive
correlations yielding a less efficient estimate.  We find that the presence
of factors with high kurtoses can decrease the value of reusing evaluations.
Section~\ref{sec:neural} presents an example where we measure
the mean dimension of a neural network classifier designed
to predict a digit $0$ through $9$ based on $784$ pixels.
It was interesting to see the mean dimensions fall in the range
from $1.35$ to $2.0$ for the penultimate layer of the network,
suggesting that the information from those pixels is being
used mostly one or two or three at a time.
For instance, there cannot be any meaningfully
large interactions of $100$ or more inputs.
Section~\ref{sec:discussion} makes some concluding
remarks.  Notably, the circumstances that make the radial method
inferior to the naive method or winding stairs for computing mean dimension
serve to make it superior to them
for some other uncertainty quantification tasks.
We also discuss randomized quasi-Monte Carlo sampling alternatives
and make brief comments about dependent inputs.
Finally, there is an Appendix in which
we provide a more detailed analysis of winding stairs.

\section{Notation}\label{sec:notation}

We begin with the analysis of variance (ANOVA) decomposition
for a function $f:\cx\to\real$ where
$\cx = \prod_{j=1}^d\cx_j$.
We let $\bsx = (x_1,\dots,x_d)$ where $x_j\in\cx_j$.
The ANOVA is defined in terms of a distribution on $\cx$
for which the $x_j$ are independent
and for which $\e( f(\bsx)^2)<\infty$.
The $\cx_j$ are ordinarily subsets of $\real$ but the ANOVA
is well defined for more general domains.
We let $P$ denote the distribution of $\bsx$
and $P_j$ denote the distribution of $x_j$.
The ANOVA of $[0,1]^d$ was proposed
by \cite{hoef:1948} for $U$-statistics,
and by \cite{sobo:1969} for numerical integration.
It is well known in statistics
following \cite{efro:stei:1981} where the ANOVA
underlies the Efron-Stein inequality for the jackknife.

We will use $1{:}d$ as a short form for $\{1,2,\dots,d\}$.
For sets $u\subseteq1{:}d$, their cardinality is $|u|$
and their complement $1{:}d\setminus u$ is denoted by $-u$.
The components $x_j$ for $j\in u$ are collectively denoted by $\bsx_u$.
We will use hybrid points that merge components from two
other points.  The point $\bsy=\bsx_u{:}\bsz_{-u}$
has $y_j = x_j$ for $j\in u$ and $y_j = z_j$ for $j\not \in u$.
It is typographically convenient to replace singletons $\{j\}$
by $j$, especially within subscripts.

The ANOVA decomposition writes
$f(\bsx) = \sum_{u\subseteq1{:}d}f_u(\bsx)$
where the `effect' $f_u$ depends on $\bsx$ only through $\bsx_u$.
The first term is $f_\emptyset(\bsx) = \e(f(\bsx))$ and
the others are defined recursively via
$$
f_u(\bsx) = \e\Bigl( f(\bsx) -\sum_{v\subsetneq u}f_v(\bsx)\!\bigm|\! \bsx_u\Bigr).
$$
The variance component for $u$ is
$$
\sigma^2_u \equiv \var(f_u(\bsx)) =
\begin{cases}
\e( f_u(\bsx)^2), & u\ne\emptyset\\
0, & u=\emptyset.
\end{cases}
$$
The effects are orthogonal under $P$
and $\sigma^2=\var(f(\bsx))=\sum_u\sigma^2_u$.
We will assume that $\sigma^2>0$ in order to make some quantities
well defined.

Sobol' indices quantify importance of
subsets of input variables on $f$.
They are a primary method in global sensitivity analysis \citep
{salt:ratt:andr:camp:cari:gate:sais:tara:2008,
ioos:lema:2015, borg:plis:2016}.
Lower and upper Sobol' indices are
$$
\ult^2_u = \sum_{v\subseteq u}\sigma^2_v
\quad\text{and}\quad
\olt^2_u = \sum_{v\cap u\ne\emptyset}\sigma^2_v,
$$
respectively.
The lower index is from \cite{sobo:1990,sobo:1993},
while the upper index was first used by \cite{homm:salt:1996}.
Both indices are commonly normalized,
with $\ult^2_u/\sigma^2$ known as the closed
index and $\olt^2_u/\sigma^2$ is called the total index.
Normalized indices are between $0$ and $1$ giving
them interpretations as a proportion of variance explained,
similar to $R^2$ from regression models.
The Sobol' indices $\ult^2_{j}$ and $\olt^2_{j}$
for singletons $\{j\}$ are of special interest.

Sobol' indices satisfy these identities
\begin{align*}
\ult^2_u & =\e\bigl( f(\bsx)f(\bsx_u{:}\bsz_{-u})\bigr)-\mu^2\\
& =\e\bigl( f(\bsx)(f(\bsx_u{:}\bsz_{-u})-f(\bsz))\bigr)\quad\text{and}\\
\olt^2_u &= \frac12\e\bigl( (f(\bsx)-f(\bsx_{-u}{:}\bsz_u))^2\bigr),
\end{align*}
when $\bsz$ is an independent copy of $\bsx$.
Those identities make it possible to estimate $\ult^2_u$ and $\olt^2_u$ by
Monte Carlo or quasi-Monte
Carlo sampling without explicitly computing estimates of any
of the effects $f_v$.
The first identity is due to \cite{sobo:1993}.
The second was proposed independently by \cite{salt:2002} and \cite{maun:2002}.
The third identity underlies an estimator of the total
index from \cite{jans:1999}.
The numerator in the estimate of $\olt^2_j$ from
\cite{homm:salt:1996} is based on the identity
of Sobol' along with $\olt^2_j = \sigma^2-\ult^2_{-j}$.

The mean dimension of $f$ is
$$
\nu(f) = \sum_{u\subseteq1{:}d}\frac{|u|\sigma^2_u}{\sigma^2}.
$$
It satisfies $1\le \nu(f)\le d$.
A low mean dimension indicates that $f$ is dominated by low
order ANOVA terms, a favorable property for some numerical problems.

An easy identity from \cite{meandim} shows that
$\sum_{u\subseteq1{:}d}|u|\sigma^2_u = \sum_{j=1}^d\olt^2_j$.
Then the mean dimension of $f$ is
$$
\nu(f)\equiv\frac1{\sigma^2}\sum_{j=1}^d \olt^2_j,\quad
\text{for}\quad
\olt^2_j =
\frac12 \e\bigl( (f(\bsx) - f(\bsx_{-j}{:}\bsz_j)\bigr)^2.
$$
Although the mean dimension combines $2^d-1$
nonzero variances it can be computed from $d$
Sobol' indices (and the total variance $\sigma^2$).

We can get a Monte Carlo estimate of
the numerator of $\nu(f)$
by summing estimates of $\olt^2_j$
such as
\begin{align}\label{eq:olthat}
\frac1{2N}\sum_{i=1}^N\bigl( f(\bsx_i)-f(\bsx_{i,-j}{:}\bsz_{i,j})\bigr)^2
\end{align}
for independent random points $\bsx_i,\bsz_i\sim P$.
Equation~\eqref{eq:olthat} corresponds to the strategy
that \cite{jans:1999} uses to estimate the numerator of
the normalized total sensitivity
index for $x_j$. The Jansen estimator was one of the
best performers in \cite{puy:etal:2020}.

There is more than one way to arrange the computation
that sums~\eqref{eq:olthat} over $j=1,\dots,d$.
When computing a list of Sobol' indices it is advantageous
to reuse many of the function values. See \cite{salt:2002}
for some strategies.
When there are $O(d)$ total indices to sum,
there are $O(d^2)$ covariances to consider and
that is the issue we focus on here.

\section{Estimation strategies}\label{sec:strategies}

Equation~\eqref{eq:olthat} gives an estimate of
$\olt^2_j$ evaluating $f$ at pairs of points
that differ only in their $j$'th coordinate.
An estimate for the numerator of $\nu(f)$ sums these estimates.
We have found empirically and somewhat surprisingly
that different sample methods for computing
the numerator $\sum_j\olt^2_j$ can have markedly different variances,
even when they are all of \cite{jans:1999} type.

A naive implementation uses $2Nd$
function evaluations
taking $\bsx_i,\bsz_i$ independent for $i=1,\dots,N$
for each of $j=1,\dots,d$.  In that strategy, the point $\bsx_i$
in~\eqref{eq:olthat} is actually different for each $j$.
Such a naive implementation is wasteful.
We could instead use the same $\bsx_i$ and $\bsz_i$ for all $j=1,\dots,d$
in the radial method of
\cite{camp:salt:cri:2011}.
This takes $N(d+1)$ evaluations of $f$.
A third strategy is known as `winding stairs'
\citep{jans:etal:1994}.
The data come from a Gibbs sampler, that in
its most basic form
changes inputs to $f$ one at a time
changing indices in this order:
 $j=1,\dots,d,1,\dots,d,\cdots,1,\dots,d$.
It uses only $Nd+1$ evaluations of $f$.
These three approaches are illustrated in Figure~\ref{fig:sampler}.
We will also consider a variant of winding stairs that randomly
refreshes after every block of $d+1$ evaluations.

\begin{figure}
\centering
\includegraphics[width=.9\hsize]{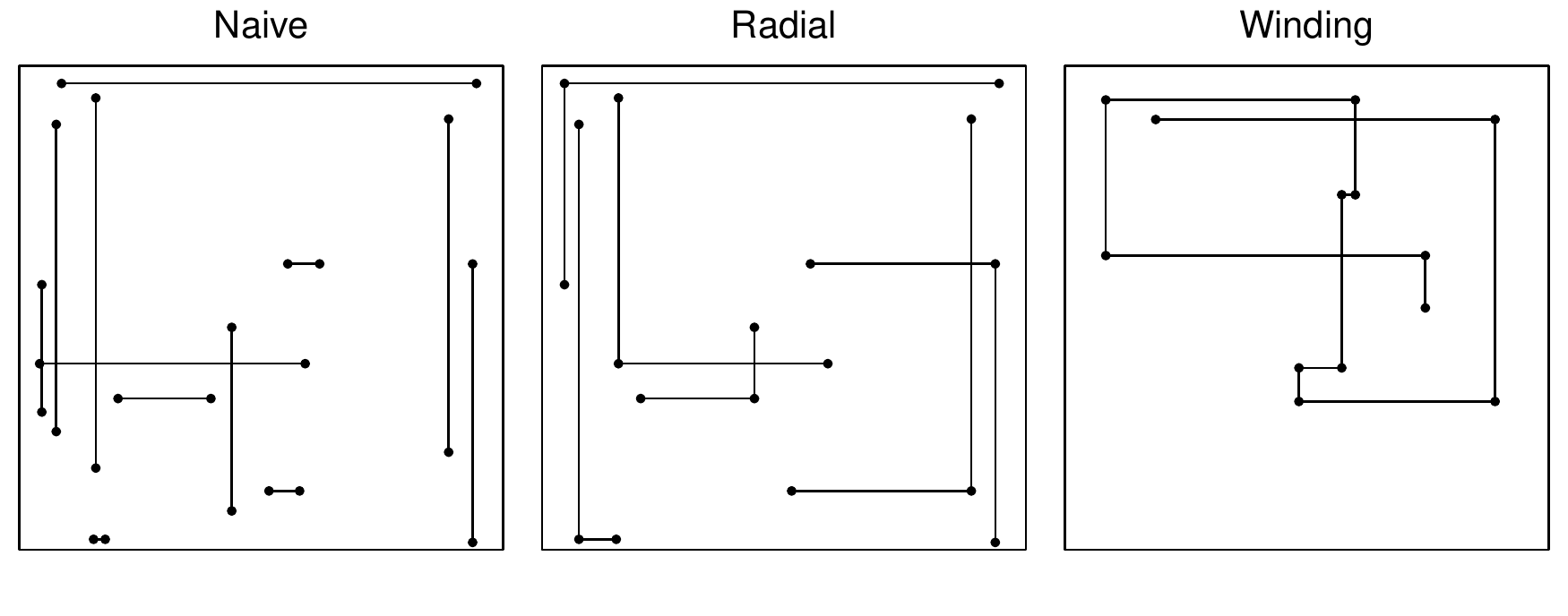}
\caption{\label{fig:sampler}
Examples of three input sets
to compute $\delta =\sum_{j=1}^d\olt^2_j$ when $d=2$.
The naive estimate uses $dN$ pairs of points,
$N$ pairs for each of $d$ variables.
Each edge connects a pair of points used in the estimate.
The radial estimate uses $N$ baseline points
and $d$ comparison points for each of them.
The winding stairs estimates sequentially
changes one input at a time.
}
\end{figure}

First we compare the naive to the radial strategy.
For $\nu = \sum_j\olt^2_j/\sigma^2$ we concentrate
on estimation strategies for the numerator
$$
\delta = \sigma^2\nu = \sum_{j=1}^d\olt^2_j.
$$
This quantity is much more challenging to estimate
than the denominator $\sigma^2$, especially for
large $d$,  as it involves $d^2$ covariances.

The naive sampler takes
\begin{align}\label{eq:naivedelta}
\hat\delta
&=\sum_{j=1}^d \wh\olt^2_j\quad\text{where}\quad
\wh\olt^2_j=
\frac1{2N}\sum_{i=1}^N
\bigl(f(\bsx^{(j)}_i)-f(\bsx^{(j)}_{i,-j}{:}\bsz_{i,j})\bigr)^2
\end{align}
with independent $\bsz_i,\bsx^{(j)}_i\sim P$
for $i=1,\dots,N$ and $j=1,\dots,d$.
It takes $N(d+1)$ input vectors and
$2Nd$ evaluations of $f$.

The radial sampler takes
\begin{align}\label{eq:radialdelta}
\tilde\delta
&=\sum_{j=1}^d \wt\olt^2_j\quad\text{where}\quad
\wt\olt^2_j=
\frac1{2N}\sum_{i=1}^N
\bigl(f(\bsx_i)-f(\bsx_{i,-j}{:}\bsz_{i,j})\bigr)^2,
\end{align}
for independent $\bsx_i,\bsz_i\sim P$,
$i=1,\dots,N$.

For $f\in L^2(P)$ both $\tilde\delta$
and $\hat\delta$ converge to $\delta=\nu\sigma^2$
as $N\to\infty$ by the law of large numbers.
To compare accuracy of these estimates we
assume also that $f\in L^4(P)$. Then $\e( f(\bsx)^4)<\infty$
and both estimates have variances that are $O(1/N)$.

A first comparison is that
\begin{align}\label{eq:nocov}
\begin{split}
\var( \tilde\delta)
& = \sum_{j=1}^d \var(\wt\olt^2_j)
+2\sum_{1\le j<k\le d}
\cov(\wt\olt^2_j, \wt\olt^2_k),\quad\text{while}\\
\var( \hat\delta)
& = \sum_{j=1}^d \var(\wh\olt^2_j)
+2\sum_{1\le j<k\le d}
\cov(\wh\olt^2_j, \wh\olt^2_k)\\
& = \sum_{j=1}^d \var(\wh\olt^2_j)
\end{split}
\end{align}
by independence of $(\bsx^{(j)}_i,\bsz_{i,j})$
from  $(\bsx^{(k)}_i,\bsz_{i,k})$ for $j\ne k$.
What we see from~\eqref{eq:nocov} is that while
the naive estimate uses about twice as many
function evaluations, the radial estimate sums
$d$ times as many terms.
The off diagonal covariances do not have to be
very large for us to have
$\var(\tilde\delta) >2\var(\hat\delta)$,
in which case $\hat\delta$ becomes the
more efficient estimate despite using more function evaluations.
Intuitively, each time $f(\bsx_i)$ takes an unusually large or small
value it could make a large contribution to all $d$ of $\wt{\olt}^2_j$
and this can result in $O(d^2)$ positive covariances.  We
study this effect more precisely below giving additional assumptions
under which $\cov(\wt{\olt}_j^2, \wt{\olt}_k^2)>0$.
We also have a numerical counter-example
at the end of this section, and
so this positive covariance does not hold for all $f\in L^4(P)$.


The winding stairs algorithm
starts at $\bsx_0\sim P$ and then
makes a sequence of single variable
changes to generate $\bsx_i$ for $i>0$.
We let $\ell(i)\in 1{:}d$ be the index of
the component
that is changed at step $i$. The new values
are independent samples $z_i \sim P_{\ell(i)}$.
That is, for $i>0$
$$
\bsx_{i,j} =\begin{cases} z_i,
&j =\ell(i)\\
\bsx_{i-1,j}, &j\ne \ell(i).
\end{cases}
$$
We have a special interest in the case where $P=\dnorm(0,I)$,
for which each $P_j$ is $\dnorm(0,1)$.

The indices $\ell(i)$ can be either deterministic
or random.  We let $\cl$ be the entire
collection of $\ell(i)$.  We assume that
the entire collection of $z_i$ are independent
of~$\cl$.
The most simple deterministic update has
$\ell(i) = 1+(i-1\mod d)$ and it cycles through
all indices $j\in1{:}d$ in order.
The simplest random update has
$\ell(i)\simiid \dustd(1{:}d)$.
In usual Gibbs sampling it would be
better to take
$\ell(i)\simiid \dustd(1{:}d\setminus\{\ell(i-1)\})$
for $i\ge2$.  Here because we are accumulating
squared differences it is not very harmful to have
$\ell(i)=\ell(i-1)$.
The vector $\bsx_i$ contains $d$ independently
sampled Gaussian random variables. Which ones
those are, depends on $\cl$.  Because $\bsx\sim\dnorm(0,I)$
conditionally on $\cl$ it also has that distribution
unconditionally.

Letting $e_j$ be the $j$'th unit vector in $\real^d$ we
can write
$$
\bsx_i = \bsx_{i-1} + (z_i-x_{i-1,\ell(i)})e_{\ell(i)}.
$$
If $\ell(i)\sim\dustd(1{:}d)$, then the distribution
of  $\bsx_i$ given $\bsx_{i-1}$ is a mixture of $d$
different Gaussian distributions, one for each
value of $\ell(i)$. As a result $\bsy_i =(\bsx_i^\tran,\bsx_{i-1}^\tran)^\tran$
does not then have a multivariate Gaussian distribution
and is harder to study.
For this reason, we focus on the deterministic update.

In the deterministic update we find that
any finite set of $\bsx_i$ or $\bsy_i$ has a multivariate
Gaussian distribution.
We also know that $\bsx_i$ and $\bsx_{i+k}$ are
independent for $k\ge d$ because after $k$ steps
all components of $\bsx_i$ have been replaced
by new $z_i$ values.
It remains to consider the correlations among a block
of $d+1$ consecutive vectors.
Those depend on the pattern of shared components
within different observations as illustrated in the following diagram:
\begin{align}\label{eq:windingdiagram}
\begin{matrix}
\bsx_d & \bsx_{d+1} & \bsx_{d+2} & \cdots & \bsx_{2d-1} &\bsx_{2d}\\[.25ex]
\| &\|&\|& &\|&\|
\\[.6ex]
\begin{pmatrix}
z_1\\
z_2\\
z_3\\
\vdots\\
z_{d-1}\\
z_d
\end{pmatrix}
&
\begin{pmatrix}
z_{d+1}\\
z_{2}\\
z_3\\
\vdots\\
z_{d-1}\\
z_d
\end{pmatrix}
&
\begin{pmatrix}
z_{d+1}\\
z_{d+2}\\
z_3\\
\vdots\\
z_{d-1}\\
z_d
\end{pmatrix}
&\cdots&
\begin{pmatrix}
z_{d+1}\\
z_{d+2}\\
z_{d+3}\\
\vdots\\
z_{2d-1}\\
z_d
\end{pmatrix}
&
\begin{pmatrix}
z_{d+1}\\
z_{d+2}\\
\vdots\\
z_{2d-1}\\
z_{2d}
\end{pmatrix}
\end{matrix}.
\end{align}

For $i\ge d$ and $j=1,\dots,d$ we can write
\begin{align}\label{eq:rofij}
\bsx_{i,j} = z_{r(i,j)}
\quad\text{where}\quad r(i,j) = d\Big\lfloor\frac{i-j}d\Bigr\rfloor+j.
\end{align}
It is convenient to use~\eqref{eq:rofij} for
all $i\ge0$ which is equivalent
to initializing the sampler at
$\bsx_0 = (z_{-(d-1)},z_{-(d-2)},\dots,z_{-1},z_0)^\tran$.
Equation~\eqref{eq:rofij}
holds for any independent $z_i \sim P_{\ell(i)}$
and does not depend
on our choice of $P_j=\dnorm(0,1)$.

The winding stairs estimate of $\delta$ is
\begin{align}\label{eq:windingdelta}
\check\delta & = \sum_{j=1}^d\check{\olt}^2_j\quad\text{for}\quad
\check{\olt}^2_j  = \frac1{2N}\sum_{i=1}^N\Delta_{d(i-1)+j}^2,
\end{align}
where $\Delta_r = f(\bsx_r)-f(\bsx_{r-1})$.
We will see that the covariances of $\check{\olt}^2_j$
and $\check{\olt}^2_k$ depend on the pattern of common
components among the $\bsx_i$. In our special
case functions certain kurtoses have an impact on the
variance of winding stairs estimates.

A useful variant of winding stairs simply makes
$N$ independent replicates of the $d+1$ vectors
shown in~\eqref{eq:windingdiagram}.
That raises the number of function
evaluations from $Nd+1$
to $N(d+1)$.  It uses $N$
independent Markov chains of length $d+1$.
For large $d$ the increased computation is negligible.
In original winding stairs, each
squared difference $\Delta_i^2=(f(\bsx_i)-f(\bsx_{i-1}))^2$
can be correlated with up to $2(d-1)$ other
squared differences.  In truncated
winding stairs, it can only
be correlated with $d-1$ other squared differences.
We denote the resulting estimate by
$\ddot\delta$ which is a sum of
${\ddot\olt}^2_j$.

For $d=2$ this truncated winding stairs method
is the same as radial sampling.
For $d\ge3$ they are different. For instance the
value of $f$ at the radial
point is compared to $f$ at $d$ other points in
the radial method while no function value is compared
to more than $2$ others in the variant of winding stairs.
See Figure~\ref{fig:3d} for an illustration when $d=3$.
\begin{figure}\centering
\includegraphics[width=.9\hsize]{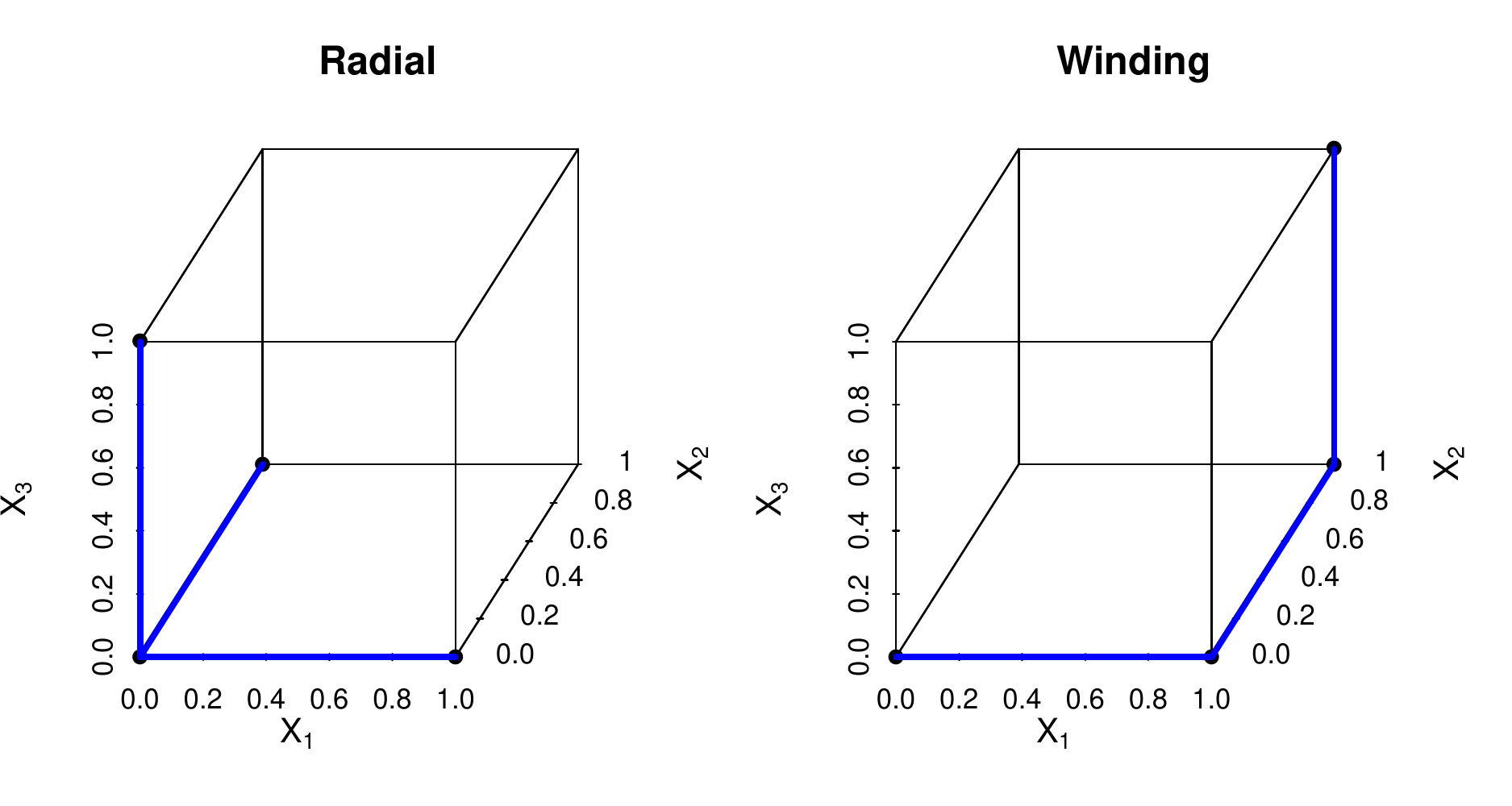}
\caption{\label{fig:3d}
The left figure shows example input points used by
the radial method for $d=3$, with thick edges connecting input
points used to form differences in $f$.  The right figure shows
the same for the truncated variant of winding stairs.
}
\end{figure}


In section~\ref{sec:multfun} we present some multiplicative
functions where the naive estimator of $\delta$ has much
less than half of the variance of the radial estimator.  To complete this section
we exhibit a numerical example where the naive estimator has increased
variance which must mean that the correlations induced by the
radial and winding estimators are at least slightly negative.
The integrand is simply $f(\bsx)=\Vert\bsx\Vert_2$ for
$\bsx\sim\dnorm(0,I)$ in $d$ dimensions.
Figure~\ref{fig:twonorm} shows results. We used $N=10^6$
evaluations to show that (truncated) winding stairs and radial
sampling both have smaller variance than the naive algorithm
for estimating $\delta$.  We also see extremely small mean
dimensions for $f(\bsx)$ that decrease as $d$ increases.
It relates to some work in progress studying mean dimension
of radial basis functions as a counterpart to \cite{hoyt:owen:2020}
on mean dimension of ridge functions.
The visible noise in that figure stems from the mean
dimensions all being so very close to $1$ that the vertical range
is quite small. The estimate for $d=1$
is roughly $0.9983$ where the true value must be $1$.

\begin{figure}
\centering
\includegraphics[width=.9\hsize]{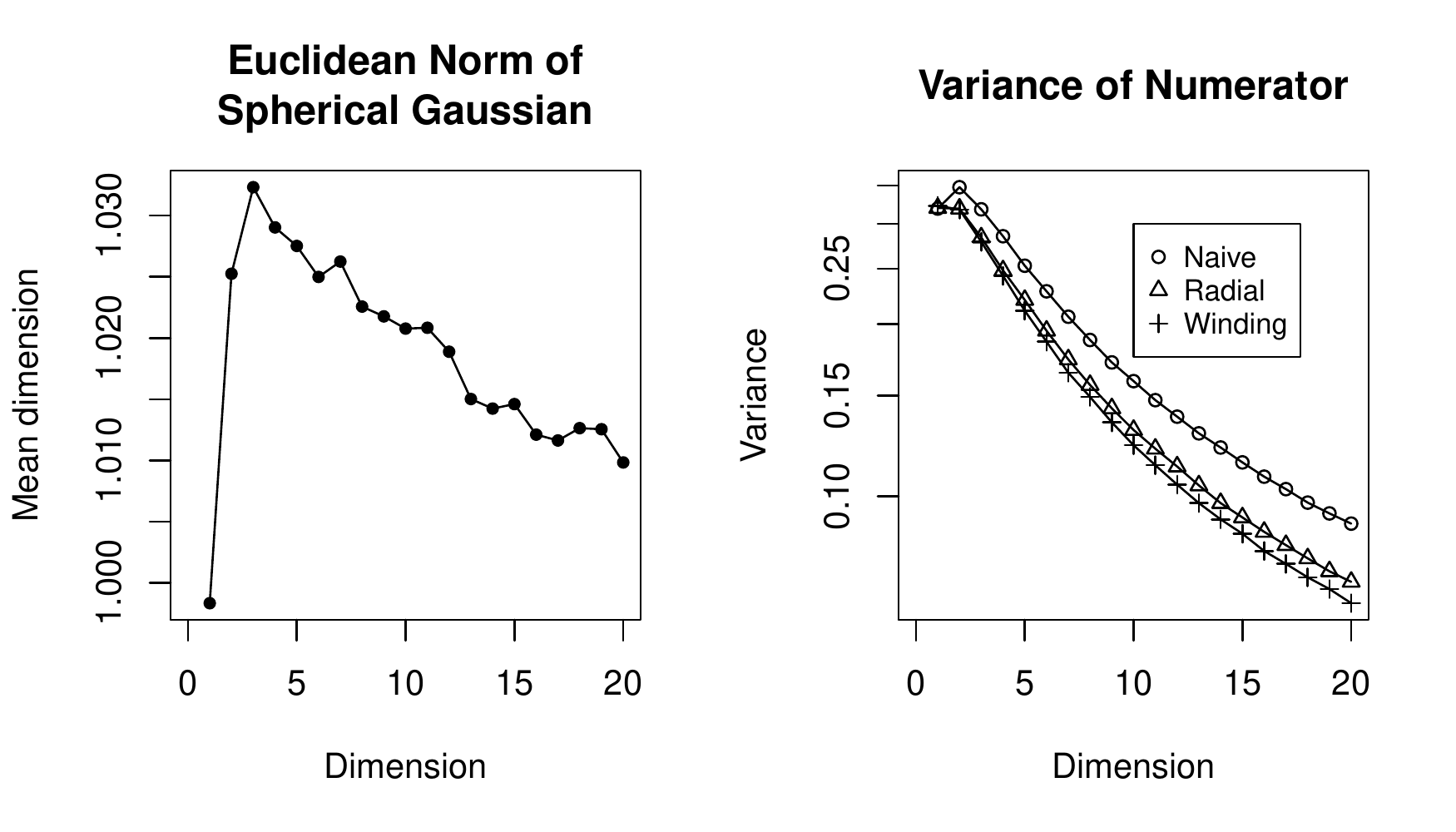}
\caption{\label{fig:twonorm}
The left panel shows low and mostly decreasing
estimates of $\nu(f)$ versus
dimension for $f(\bsx)=\Vert\bsx\Vert_2$ when $\bsx\sim\dnorm(0,I)$.
The right panel shows variances of estimates of $\delta$ for this function.
}
\end{figure}

\section{Additive and multiplicative functions}\label{sec:addmult}
The variances of quadratic functions
of the $f(\bsx_i)$ values such as $\hat\delta$, $\tilde\delta$
and $\check\delta$, involve
fourth moments of the original function. Whereas $2^d$
variance components are sufficient to define Sobol'
indices and numerous generalizations, fourth moments
do not simplify nearly as much from orthogonality
and involve considerably more quantities.
While distinct pairs of ANOVA effects are orthogonal,
we find for non-empty $u,v,w\subset1{:}d$ that
$$
\e\bigl( f_u(\bsx)f_v(\bsx)f_w(\bsx)\bigr)
$$
does not in general vanish when
$u\subset v\cup w$,
$v\subset u\cup w$
and $w\subset u\cup v$ all hold.
This `chaining phenomenon' is worse
for products of four effects: the number
of non-vanishing combinations rises
even more quickly with $d$.
The chaining problem also comes up if
we expand $f$ in an orthonormal basis for $L^2(P)$
and then look at fourth moments.

In this section we investigate some special functional
forms. The first is an additive model
\begin{align}\label{eq:fadd}
f_A(\bsx) = \mu+\sum_{j=1}^dg_j(x_j)
\end{align}
where $\e(g_j(x_j))=0$.
An additive model with finite variance has mean
dimension $\nu(f_A)=1$.  It represents one extreme in
terms of mean dimension.
The second function we consider is a product model
\begin{align}\label{eq:fprod}
f_P(\bsx) = \prod_{j=1}^dg_j(x_j)
\end{align}
where $\e(g_j(x_j))=\mu_j$ and
$\var(g_j(x_j))=\sigma^2_j$.
Product functions are frequently used as test functions.
For instance, Sobol's $g$-function
\citep{salt:sobo:1995} is the product
$\prod_{j=1}^d(|4x_j-2|+a_j)/(1+a_j)$
in which later authors make various choices for the constants $a_j$.

If all $\mu_j=0$ then $\nu(f_P)=d$.
In general, the mean dimension of a product function is
$$
\nu(f_P) = \frac{\sum_{j=1}^d
\sigma^2_j/(\mu_j^2+\sigma^2_j)}
{1-\prod_{j=1}^d\mu^2_j/(\mu^2_j+\sigma^2_j)}.
$$
See~\cite{dimdist}.

Additive and multiplicative functions comprise
two extremes in mean dimension.  Additive
functions always have mean dimension $1$.
While multiplicative functions can have any
mean dimension in the interval $(1,d]$ they
are easily engineered to provide functions
with mean dimension $d$ by setting all $\mu_j=0$.

\subsection{Additive functions}

We will use
Lemma~\ref{lem:mu4s} below  to compare the variances
of our mean dimension estimators for additive functions.
For these, we  need
the kurtosis of some random variables.
Recall that the kurtosis of a random variable
$Y$ with variance $\sigma^2>0$ is $\kappa = \e(( Y-\mu)^4)/\sigma^4-3$
which can be infinite.  As points of reference, if $Y$ is Gaussian then $\kappa=0$
and if $Y$ has a uniform distribution then $\kappa=-6/5$ and the
smallest possible kurtosis is $-2$.

\begin{lemma}\label{lem:mu4s}
Let $Y_1,Y_2,Y_3,Y_4$ be independent identically
distributed random variables with variance $\sigma^2$ and kurtosis $\kappa$.
Then
\begin{align*}
\e\bigl( (Y_1-Y_2)^4) & =(12+2\kappa)\sigma^4\\
\var((Y_1-Y_2)^2) & =
(8+2\kappa)\sigma^4\\
\e\bigl((Y_1-Y_2)^2(Y_3-Y_4)^2\bigr)
&= 4\sigma^4\\
\e\bigl((Y_1-Y_2)^2(Y_1-Y_3)^2\bigr)
&=(6+\kappa)\sigma^4.
\end{align*}
\end{lemma}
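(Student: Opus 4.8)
The plan is to reduce everything to central moments of a single variable $Y$ with $\e(Y)=\mu$, $\var(Y)=\sigma^2$, and $\e((Y-\mu)^4)=(\kappa+3)\sigma^4$, and then expand each expectation by multilinearity using independence. Since all four identities are invariant under the shift $Y_i\mapsto Y_i-\mu$ (each expression is a polynomial in differences $Y_i-Y_j$), I would assume without loss of generality that $\mu=0$, so that $\e(Y_i)=0$, $\e(Y_i^2)=\sigma^2$, $\e(Y_i^3)$ is some third moment that will drop out by the symmetry below, and $\e(Y_i^4)=(\kappa+3)\sigma^4$.

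First I would handle $\e((Y_1-Y_2)^4)$. Expand by the binomial theorem: $(Y_1-Y_2)^4 = Y_1^4 - 4Y_1^3Y_2 + 6Y_1^2Y_2^2 - 4Y_1Y_2^3 + Y_2^4$. Taking expectations and using independence together with $\e(Y_i)=0$, the two cross terms with an odd power vanish, leaving $2(\kappa+3)\sigma^4 + 6\sigma^2\cdot\sigma^2 = (2\kappa+12)\sigma^4$, which is the first identity. The second identity then follows immediately since $\var((Y_1-Y_2)^2) = \e((Y_1-Y_2)^4) - \big(\e((Y_1-Y_2)^2)\big)^2$ and $\e((Y_1-Y_2)^2) = \e(Y_1^2) + \e(Y_2^2) = 2\sigma^2$, so the variance is $(2\kappa+12)\sigma^4 - 4\sigma^4 = (2\kappa+8)\sigma^4$. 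For the third identity, $(Y_1-Y_2)^2$ and $(Y_3-Y_4)^2$ are functions of disjoint sets of independent variables, hence independent, so the expectation factors as $\e((Y_1-Y_2)^2)\,\e((Y_3-Y_4)^2) = 2\sigma^2\cdot 2\sigma^2 = 4\sigma^4$.

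The one requiring real bookkeeping is the fourth, $\e((Y_1-Y_2)^2(Y_1-Y_3)^2)$, because $Y_1$ appears in both factors. I would write $(Y_1-Y_2)^2(Y_1-Y_3)^2 = \big((Y_1-Y_2)(Y_1-Y_3)\big)^2$ and expand $(Y_1-Y_2)(Y_1-Y_3) = Y_1^2 - Y_1Y_3 - Y_1Y_2 + Y_2Y_3$, then square and take the expectation term by term, or equivalently condition on $Y_1$ first: $\e\big((Y_1-Y_2)^2(Y_1-Y_3)^2 \mid Y_1\big) = \big(\e((Y_1-Y_2)^2\mid Y_1)\big)^2$ by conditional independence of $Y_2$ and $Y_3$ given $Y_1$, and $\e((Y_1-Y_2)^2\mid Y_1) = Y_1^2 + \sigma^2$. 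Hence the quantity equals $\e\big((Y_1^2+\sigma^2)^2\big) = \e(Y_1^4) + 2\sigma^2\e(Y_1^2) + \sigma^4 = (\kappa+3)\sigma^4 + 2\sigma^4 + \sigma^4 = (\kappa+6)\sigma^4$, as claimed.

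The only subtlety worth flagging is the reduction to $\mu=0$: one should check that each of the four expressions is genuinely a function of the differences $Y_i-Y_j$ alone (so that adding a common constant to all variables changes nothing), which is immediate by inspection, and that finiteness of the fourth moment — guaranteed by the hypothesis that $\kappa$ is the kurtosis, hence finite — is what makes all these expectations well-defined. No step is a genuine obstacle; the conditioning argument for the fourth identity is the place where a naive brute-force expansion could get messy, so presenting it via conditioning on $Y_1$ is the cleanest route.
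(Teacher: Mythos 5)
Your proof is correct and takes the same approach the paper intends: the paper's proof is a one-line assertion that the identities ``follow directly from independence of the $Y_j$ and the definitions of variance and kurtosis,'' and your argument simply supplies the omitted computations (the reduction to $\mu=0$, the binomial expansion, and the conditioning on $Y_1$ for the fourth identity are all clean and check out). Nothing further is needed.
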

\begin{proof}
These follow directly from independence of
the $Y_j$ and the definitions of variance and kurtosis.
\end{proof}

\begin{theorem}\label{thm:additivecase}
For the additive function $f_A$ of~\eqref{eq:fadd},
\begin{align}
\var( \tilde\delta)
=\var( \hat\delta)=\var( \ddot\delta) &=
\frac1N\sum_{j=1}^d\Bigl(2+\frac{\kappa_j}2\Bigr)\sigma^4_j\label{eq:varnaivnum}
\end{align}
and
\begin{align}
\var( \check\delta)
&=\var(\ddot\delta)+\frac{N-1}{2N^2}\sum_{j=1}^d(\kappa_j+2)\sigma^4_j.
\end{align}
\end{theorem}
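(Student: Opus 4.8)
The plan is to exploit the fact that for the additive function every squared difference occurring in any of these estimators collapses onto a single coordinate: if $\bsx$ and $\bsy$ agree in all coordinates except the $j$th, then $f_A(\bsx)-f_A(\bsy)=g_j(x_j)-g_j(y_j)$ because $\mu$ and all $g_k$ with $k\ne j$ cancel. Consequently every term $\bigl(f(\bsx_i)-f(\bsx_{i,-j}{:}\bsz_{i,j})\bigr)^2$ in \eqref{eq:naivedelta} and \eqref{eq:radialdelta} equals $\bigl(g_j(x_{i,j})-g_j(z_{i,j})\bigr)^2$, a squared difference of two independent copies of $g_j$, and likewise every $\Delta_r^2$ in \eqref{eq:windingdelta} and its truncated analogue reduces to a squared difference of $g_{\ell(r)}$ evaluated at two independent draws. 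Lemma~\ref{lem:mu4s} then supplies all the moments I need.

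First I would dispatch the three methods that share a variance. For $\hat\delta$ the pairs $(\bsx^{(j)}_i,\bsz_{i,j})$ are independent across $j$, so $\var(\hat\delta)=\sum_j\var(\wh{\olt}^2_j)$; since the $N$ summands defining $\wh{\olt}^2_j$ are i.i.d., the second identity of Lemma~\ref{lem:mu4s} (with $Y_1=g_j(x_{i,j})$, $Y_2=g_j(z_{i,j})$) gives $\var(\wh{\olt}^2_j)=\frac1{4N}(8+2\kappa_j)\sigma^4_j=\frac1N\bigl(2+\tfrac{\kappa_j}2\bigr)\sigma^4_j$. For $\tilde\delta$ the only change is that $\bsx_i$ is shared across $j$, but coordinates $j$ and $k$ of $\bsx_i$ (and of $\bsz_i$) are independent, so $\cov(\wt{\olt}^2_j,\wt{\olt}^2_k)=0$ and the same sum results. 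For $\ddot\delta$, within a single independent block the $d$ one-coordinate updates touch coordinate $k$ only through two fresh draws $z_k$ and $z_{d+k}$, so the $d$ squared differences in a block are mutually independent (disjoint sets of $z$'s), again giving $\var(\ddot\delta)=\sum_j\frac1N\bigl(2+\tfrac{\kappa_j}2\bigr)\sigma^4_j$. This establishes \eqref{eq:varnaivnum}.

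The substantive case is $\check\delta$. Here I would use \eqref{eq:rofij}: the step indexed $d(i-1)+j$ changes coordinate $j$, and $r(d(i-1)+j,\,j)=d(i-1)+j$ while $r(d(i-1)+j-1,\,j)=d(i-2)+j$, so $\Delta_{d(i-1)+j}=g_j(z_{d(i-1)+j})-g_j(z_{d(i-2)+j})$. Writing $W_{i,j}=g_j(z_{d(i-1)+j})$ for $i=0,\dots,N$, the $W_{i,j}$ are i.i.d.\ in $i$ for fixed $j$ and, because for $j\ne k$ the indices $d(i-1)+j$ and $d(i'-1)+k$ never coincide, independent across $j$; hence $\cov(\check{\olt}^2_j,\check{\olt}^2_k)=0$ and $\check{\olt}^2_j=\frac1{2N}\sum_{i=1}^N(W_{i,j}-W_{i-1,j})^2$ is a first-difference sum of an i.i.d.\ sequence. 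Then $\var\bigl(\sum_{i=1}^N(W_{i,j}-W_{i-1,j})^2\bigr)$ has $N$ diagonal terms equal to $(8+2\kappa_j)\sigma^4_j$ and $2(N-1)$ adjacent cross terms; the fourth identity of Lemma~\ref{lem:mu4s}, applied with the shared variable $W_{i,j}$, gives $\e\bigl((W_{i,j}-W_{i-1,j})^2(W_{i+1,j}-W_{i,j})^2\bigr)=(6+\kappa_j)\sigma^4_j$, so each adjacent covariance equals $(6+\kappa_j)\sigma^4_j-(2\sigma^2_j)^2=(2+\kappa_j)\sigma^4_j$. Dividing by $4N^2$ and regrouping yields $\var(\check{\olt}^2_j)=\frac1N\bigl(2+\tfrac{\kappa_j}2\bigr)\sigma^4_j+\frac{N-1}{2N^2}(\kappa_j+2)\sigma^4_j$, and summing over $j$ gives the stated relation to $\var(\ddot\delta)$.

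The only genuinely nontrivial point is the bookkeeping in the last paragraph: reading off from \eqref{eq:rofij} that, in the single long winding-stairs chain, consecutive uses of coordinate $j$ share exactly one $z$ value, so the relevant squared differences are lag-one dependent rather than independent, and then correctly counting the $N-1$ overlapping adjacent pairs. Everything else is a direct substitution into Lemma~\ref{lem:mu4s}.
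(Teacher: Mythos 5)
Your proof is correct and follows essentially the same route as the paper: reduce every squared difference to a single-coordinate difference $g_j(\cdot)-g_j(\cdot)$, observe the resulting independence across $j$ for all four estimators, and apply Lemma~\ref{lem:mu4s}, with the lag-one overlap count $2(N-1)$ and the identity $\e\bigl((Y_1-Y_2)^2(Y_1-Y_3)^2\bigr)=(6+\kappa)\sigma^4$ supplying the extra $\frac{N-1}{2N^2}(\kappa_j+2)\sigma_j^4$ term for $\check\delta$. The only organizational difference is that the paper delegates the winding-stairs part to Theorem~\ref{thm:additivecasewinding} in the Appendix, whose argument (with $Y_i=g_j(z_{d(i-1)+j})$) is the same as yours.
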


\begin{proof}
The winding stairs results for $\check\delta$ and $\ddot\delta$
quoted above are proved in Theorem~\ref{thm:additivecasewinding}
of the Appendix.
For the naive estimate,
$\wh\olt^2_j$ is independent of $\wh\olt^2_k$
when $j\ne k$ as remarked upon
at~\eqref{eq:nocov}.
For an additive function
$$
f_A( \bsx_i)
-f_A( \bsx_{i,-j}{:}\bsz_{i,j})
= g_j(x_{ij})-g_j(z_{ij})
$$
is independent of
$g_k(x_{ik})-g_k(z_{ik})$ for $j\ne k$
and so the radial estimate has the
same independence property as the
naive estimate.
Therefore
\begin{align*}
\var( \wh\olt^2_j)=\var( \wt\olt^2_j)
&=\frac1{4N}
\var\bigl( (g_j(x_{1j})-g_j(z_{1j}))^2\bigr)
\end{align*}
and using Lemma~\ref{lem:mu4s},
$\var( (g_j(x_{1j})-g_j(z_{1j}))^2)
=(8+2\kappa_j)\sigma^4_j$.
\end{proof}

If $f(\bsx)$ is additive, then Theorem~\ref{thm:additivecase}
shows that the radial method is better than the naive one.
They have the same variance but the naive method uses
roughly twice as many function evaluations.  If the function is
nearly additive, then it is reasonable to expect the variances
to be nearly equal and the radial method to be superior.
Because $\kappa_j\ge-2$ always holds, the theorem
shows an advantage to truncated winding stairs over plain winding stairs.

\subsection{Multiplicative functions}\label{sec:multfun}
We turn next to functions of product form.
For the factors $g_j(x_j)$ defining $f$ in equation~\eqref{eq:fprod},
we let $\mu_{2j}=\e(g_j(x_j)^2)$,
$\mu_{3j}=\e(g_j(x_j)^3)$ and
$\mu_{4j}=\e(g_j(x_j)^4)$.
To simplify some expressions for winding stairs
we adopt the conventions that for $1\le j<k\le d$
and quantities $q_\ell$,
$\prod_{\ell\in(j,k)}q_\ell$
means $\prod_{\ell=j+1}^{k-1}q_\ell$ and $\prod_{\ell\not\in[j,k]}q_\ell$
means $\prod_{\ell=1}^{j-1}q_\ell\times\prod_{\ell=k+1}^dq_\ell$,
with products over empty index sets equal to one.

\begin{theorem}\label{thm:multiplicativecase}
For the product function $f_P$ of~\eqref{eq:fprod},
\begin{align}
\var( \hat\delta) &=
\frac1N\sum_{j=1}^d\sigma_j^4\Bigl(\Bigl(3+\frac{\kappa_j}2 \Bigr)\prod_{\ell\ne j}\mu_{4\ell}
-\prod_{\ell\ne j}\mu_{2\ell}^2\Bigr) \quad\text{and}\label{eq:varnaivprod}\\
\var( \tilde\delta)
&= \var(\hat\delta)
+\frac{2}{N}\sum_{j<k}
\Bigl(\frac{\eta_j\eta_k}4-\sigma^2_j\sigma^2_k\mu_{2j}\mu_{2k}\Bigr)
\prod_{\ell\not\in\{j,k\}}\mu_{4\ell},
\label{eq:varradiprod}
\end{align}
where
$\eta_j=\e( g_j(x_j)^2(g_j(x_j)-g_j(z_j))^2)
=\mu_{4j}-2\mu_j\mu_{3j}+\mu_{2j}^2$,
for independent $x_j,z_j\sim P_j$.
The winding stairs estimates satisfy
\begin{align}
\label{eq:varmultdiswindnummain}
\var( \ddot\delta)
&=
\var(\hat\delta)
+\frac2{N}\sum_{j<k}
\biggl(\,
\frac{\eta_j\eta_k}4\prod_{\ell \in(j,k)}\mu_{2\ell }^2\prod_{\ell \not\in[j,k]}\mu_{4\ell }
-\sigma^2_j\sigma^2_k\mu_{2j}\mu_{2k}
\prod_{\ell \not\in j{:}k}\mu_{2\ell }^2
\biggr)
\end{align}
and
\begin{align}\label{eq:varmultwindnummain}
\var( \check\delta)
&=\var( \ddot\delta)
+\frac2{N}\sum_{j<k}
\biggl(\,
\frac{\eta_j\eta_k}4\prod_{\ell \not\in j{:}k}\mu_{4\ell }
-\sigma^2_j\sigma^2_k
\prod_{\ell \not\in j{:}k}\mu_{2\ell }^2
\biggr)
\prod_{\ell \in(j,k)}\mu^2_{2\ell }.
\end{align}
\end{theorem}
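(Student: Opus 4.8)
The plan is to express all four variances in terms of second moments of one-coordinate squared differences of $f_P$ and then to exploit the product structure of~\eqref{eq:fprod}, which makes every such moment factor over the $d$ coordinates. The building blocks are, for independent $x_j,z_j\sim P_j$, the quantity $D_j=(g_j(x_j)-g_j(z_j))^2$ with $\e(D_j)=2\sigma^2_j$ and $\e(D_j^2)=(12+2\kappa_j)\sigma^4_j$ (the latter from Lemma~\ref{lem:mu4s}), the mixed moment $\e\bigl(g_j(x_j)^2D_j\bigr)=\mu_{4j}-2\mu_j\mu_{3j}+\mu_{2j}^2=\eta_j$, and --- for winding stairs --- the three-copy moment $\e\bigl((g_j(w_j)-g_j(x_j))^2(g_j(x_j)-g_j(z_j))^2\bigr)=(6+\kappa_j)\sigma^4_j$, again from Lemma~\ref{lem:mu4s}. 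Since $f_P(\bsx)-f_P(\bsx_{-j}{:}\bsz_j)=(g_j(x_j)-g_j(z_j))\prod_{\ell\ne j}g_\ell(x_\ell)$, every summand in $\hat\delta,\tilde\delta,\ddot\delta,\check\delta$ is a copy of $D_j$ (at the relevant arguments) times $\prod_{\ell\ne j}g_\ell(\cdot)^2$, and two such summands have an expectation that factors coordinate by coordinate according to which function arguments they share: a coordinate appearing as $g_\ell(\cdot)^4$ contributes $\mu_{4\ell}$, a coordinate appearing as $g_\ell(\cdot)^2$ with independent arguments contributes $\mu_{2\ell}^2$, and the ``active'' coordinates contribute the $D_j$-moments above (or $\eta_j$ when one copy enters squared and the other as a difference).

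For $\hat\delta$ the $d$ summands are independent by~\eqref{eq:nocov}, so $\var(\hat\delta)=\sum_j\var(\wh\olt^2_j)$ with $\var(\wh\olt^2_j)=\tfrac1{4N}\bigl(\e(D_j^2)\prod_{\ell\ne j}\mu_{4\ell}-(2\sigma^2_j)^2\prod_{\ell\ne j}\mu_{2\ell}^2\bigr)$; substituting the two $D_j$-moments gives~\eqref{eq:varnaivprod}. For $\tilde\delta$ each $\wt\olt^2_j$ has the same marginal law as $\wh\olt^2_j$, so the diagonal of $\var(\tilde\delta)$ reproduces $\var(\hat\delta)$ and only $2\sum_{j<k}\cov(\wt\olt^2_j,\wt\olt^2_k)$ remains. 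Cross terms with unequal sample index vanish, so this is $\tfrac1{2N}\sum_{j<k}\cov(A_j,A_k)$, where $A_j=D_j\prod_{\ell\ne j}g_\ell(x_\ell)^2$ is built from a base point $\bsx$ common to all $j$ and an independent $\bsz$. In $A_jA_k$ every $\ell\notin\{j,k\}$ appears as $g_\ell(x_\ell)^4$ (hence $\mu_{4\ell}$), coordinate $j$ appears as $g_j(x_j)^2$ (from $A_k$) times $D_j$ (from $A_j$), expectation $\eta_j$, and symmetrically coordinate $k$ contributes $\eta_k$; subtracting $\e(A_j)\e(A_k)$ yields~\eqref{eq:varradiprod}.

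For winding stairs I would first use the index map~\eqref{eq:rofij} to record that $\bsx_r$ is made of the $z$'s with indices $\{r-d+1,\dots,r\}$, so $\Delta_r$ depends only on $z_{r-d},\dots,z_r$ --- with $z_{r-d}$ appearing solely inside the one-coordinate difference and $z_r$ the fresh value --- and hence $\Delta_r^2\indep\Delta_{r'}^2$ once $|r-r'|\ge d+1$. The surviving cross terms again reduce to products of the moments above, sorted by which $z$-indices are shared and in which of three roles each sits (fresh value, old value, or a passive shared coordinate, the last contributing $\mu_{4\ell}$ if both factors carry the same argument and $\mu_{2\ell}^2$ if independent). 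For the truncated estimator $\ddot\delta$ the $(d+1)$-vector blocks of~\eqref{eq:windingdiagram} are independent, so $\var(\ddot\delta)=\var(\hat\delta)$ plus the within-block covariances, which are read directly off~\eqref{eq:windingdiagram} and give~\eqref{eq:varmultdiswindnummain}. For the untruncated $\check\delta$ there are in addition the between-cycle covariances: a same-coordinate pair at lag $d$ (evaluated with the three-copy moment $(6+\kappa_j)\sigma^4_j$) and, for $j\ne k$, a pair at lag $d-|j-k|$; collecting these over the $O(N)$ consecutive-cycle pairs produces the extra term in~\eqref{eq:varmultwindnummain}. The detailed enumeration is carried out in the Appendix (cf.\ Theorem~\ref{thm:additivecasewinding} for the additive analogue).

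The arithmetic for $\hat\delta$ and $\tilde\delta$ is routine. The substantive work --- the reason the paper defers it to the Appendix --- is the winding-stairs bookkeeping: for every ordered coordinate pair $(j,k)$ and every admissible lag one must pin down exactly which $z$-indices two squared differences share and in which of the three roles each shared index sits, and then count the correlated pairs of $\Delta_r^2$'s in the chain so that the $N$-dependence comes out right (the $\tfrac2N$ prefactors, and the presence of the between-cycle contribution for $\check\delta$ but not for $\ddot\delta$). Once organized through the explicit map $r(i,j)$ of~\eqref{eq:rofij}, each individual cross-moment is a mechanical product of $\mu_{4\ell}$'s, $\mu_{2\ell}^2$'s, and the one-, two-, and three-copy moments of $g_j$ supplied by Lemma~\ref{lem:mu4s}.
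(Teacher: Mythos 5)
Your proposal follows the paper's proof essentially verbatim: the naive and radial variances come from the same factorized moment calculations ($\e(\Delta_j^4)=(12+2\kappa_j)\sigma_j^4\prod_{\ell\ne j}\mu_{4\ell}$ and $\e(\Delta_j^2\Delta_k^2)=\eta_j\eta_k\prod_{\ell\notin\{j,k\}}\mu_{4\ell}$ via Lemma~\ref{lem:mu4s} and the independence structure of~\eqref{eq:nocov}), and the winding-stairs formulas are deferred to the same within-block versus between-block covariance enumeration that the paper carries out in Theorem~\ref{thm:multiplicativecasewinding} of the Appendix. There is no substantive difference in route or content.
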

\begin{proof}
The winding stairs results are from Theorem~\ref{thm:multiplicativecasewinding}
in the Appendix.
Next we turn to the naive estimator.
For $\bsx,\bsz\sim P$ independently,
define
$\Delta_j=\Delta_j(\bsx,\bsz)
\equiv f_P(\bsx)-f_P(\bsx_{-j}{:}\bsz_j)$.
Now
\begin{align*}
\Delta_j & = (g_j(x_j)-g_j(z_j))
\times\prod_{\ell\ne j}g_\ell(x_\ell)
\end{align*}
and so
$\e(\Delta_j^2)=2\sigma^2_j\times\prod_{\ell\ne j}\mu_{2\ell}$
and $\e(\Delta_j^4)=(12+2\kappa_j)\sigma^4_j\times
\prod_{\ell\ne j}\mu_{4j}$, from Lemma~\ref{lem:mu4s}.
Therefore
\begin{align*}
\var(\Delta_j^2) =
(12+2\kappa_j)\sigma^4_j\times \prod_{\ell\ne j}\mu_{4j}
-
4\sigma^4_j\times\prod_{\ell\ne j}\mu_{2\ell}^2.
\end{align*}
establishing~\eqref{eq:varnaivprod}.

In the radial estimate, $\Delta_j$ is as above and
$\Delta_k = (g_k(x_k)-g_k(z_k))\times\prod_{\ell\ne k}g_{\ell}(x_{\ell})$.
In this case however the same point $\bsx$ is used in both
$\Delta_j$ and $\Delta_k$ so
$\e(\Delta_j^2\Delta_k^2)$ equals
\begin{align*}
&
\e\Bigl(g_j(x_j)^2g_k(x_k)^2(g_j(x_j)-g_j(z_j))^2(g_k(x_k)-g_k(z_k))^2
\prod_{\ell\not\in\{j,k\}} g_\ell(x_\ell)^4
\Bigr)\\
&=
\eta_j\eta_k\prod_{\ell\not\in\{j,k\}}\mu_{4\ell}.
\end{align*}
Then
$\cov(\Delta_j^2,\Delta_k^2)  =
\bigl(\eta_j\eta_k
-4\sigma_j^2\sigma_k^2\mu_{2j}\mu_{2k}\bigr)
\prod_{\ell\not\in\{j,k\}}\mu_{4\ell}$,
establishing~\eqref{eq:varradiprod}.
\end{proof}

We comment below on interpretations of the winding
stairs quantities. First we compare naive to radial
sampling.

As an illustration, suppose that $g_j(x_j)\sim\dnorm(0,1)$
for $j=1,\dots,d$.
Then
\begin{align*}
\var( \hat\delta) &=
\frac1N\sum_{j=1}^d(3^d-1)=\frac{d(3^d-1)}N
\end{align*}
and since this example has $\eta_j = 4$,
\begin{align*}
\var( \tilde\delta) &=
\frac{d(3^d-1)}N
+\frac2N\sum_{j<k}\Bigl( \frac{16}4-1\Bigr) 3^{d-2}
=
\frac{d(3^d-1)}N
+\frac{2d(d-1)3^{d-1}}N.
\end{align*}
For large $d$ the radial method has variance about $2d/3$ times
as large as the naive method. Accounting for the reduced sample
size of the radial method it has efficiency approximately $3/d$
compared to the naive method, for this function.

A product of mean zero functions has mean dimension $d$
making it an exceptionally hard case.
More generally, if $\eta_j/2-\sigma^2_j\mu_{2j}\ge\epsilon >0$
for $j\in1{:}d$, then $\var(\hat\delta)=O(d/N)$ while
$\var(\tilde\delta)$ is larger than a multiple of $d^2/N$.

\begin{corollary}
For the product function $f_P$ of~\eqref{eq:fprod},
suppose that $\kappa_j\ge -5/16$ for $j=1,\dots,d$.
Then $\cov(\wt\olt^2_j, \wt\olt^2_k)\ge0$  
for $1\le j<k\le d$,
and so $\var( \tilde\delta)\ge\var(\hat\delta)$.
\end{corollary}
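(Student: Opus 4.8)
The plan is to start from the covariance computation already carried out in the proof of Theorem~\ref{thm:multiplicativecase}, which gives $\cov(\Delta_j^2,\Delta_k^2)=\bigl(\eta_j\eta_k-4\sigma_j^2\sigma_k^2\mu_{2j}\mu_{2k}\bigr)\prod_{\ell\notin\{j,k\}}\mu_{4\ell}$ and hence, since $\wt\olt^2_j$ is an average over $N$ independent replicates, $\cov(\wt\olt^2_j,\wt\olt^2_k)=\tfrac1{4N}\cov(\Delta_j^2,\Delta_k^2)$. Each $\mu_{4\ell}=\e(g_\ell(x_\ell)^4)\ge0$, so the sign of $\cov(\wt\olt^2_j,\wt\olt^2_k)$ is the sign of $\eta_j\eta_k-4\sigma_j^2\sigma_k^2\mu_{2j}\mu_{2k}$; moreover the final assertion $\var(\tilde\delta)\ge\var(\hat\delta)$ then follows at once from~\eqref{eq:varradiprod} (equivalently, from~\eqref{eq:nocov}, since the naive cross-covariances vanish). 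Thus the whole corollary reduces to the single-variable inequality $\eta_j\ge2\sigma_j^2\mu_{2j}$ for each $j$: because $\eta_j=\e(g_j(x_j)^2(g_j(x_j)-g_j(z_j))^2)\ge0$ and $\sigma_j^2,\mu_{2j}\ge0$, multiplying that inequality for index $j$ by the one for index $k$ yields $\eta_j\eta_k\ge4\sigma_j^2\sigma_k^2\mu_{2j}\mu_{2k}$.

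Next I would turn $\eta_j\ge2\sigma_j^2\mu_{2j}$ into a statement purely about univariate moments. Writing $Y=g_j(x_j)-\mu_j$, with third central moment $m_3=\e(Y^3)$, and expanding $\eta_j=\mu_{4j}-2\mu_j\mu_{3j}+\mu_{2j}^2$ in central moments---equivalently, recognizing $\eta_j-2\sigma_j^2\mu_{2j}$ as $\cov\bigl(g_j(x_j)^2,(g_j(x_j)-g_j(z_j))^2\bigr)$---everything collapses to the identity $\eta_j-2\sigma_j^2\mu_{2j}=2\mu_j m_3+(\kappa_j+2)\sigma_j^4$. Rescaling $g_j$, which alters neither $\kappa_j$ nor the sign of this expression, I may assume $\sigma_j=1$, so that the target reads $2\mu_j m_3+\kappa_j+2\ge0$. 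Only the cross term $2\mu_j m_3$ can be negative, and since $\kappa_j\ge-2$ always holds, the task is to show that $\kappa_j+2$ dominates it.

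To bound $2\mu_j m_3$ I would invoke the classical skewness--kurtosis (Pearson) inequality $m_3^2\le(\kappa_j+2)\sigma_j^6$, a consequence of positive semidefiniteness of the moment matrix of $(1,Y,Y^2)$; after the normalization this is $|m_3|\le\sqrt{\kappa_j+2}$. Then $2\mu_j m_3+\kappa_j+2\ge(\kappa_j+2)-2|\mu_j|\sqrt{\kappa_j+2}=\sqrt{\kappa_j+2}\bigl(\sqrt{\kappa_j+2}-2|\mu_j|\bigr)$, so it remains to control $|\mu_j|$ against $\sqrt{\kappa_j+2}$. This is where I expect the constant $5/16$ to enter, through the moment-problem constraints linking $\mu_j$ to the higher moments under the normalization (the crude bound $\mu_j^2\le\mu_{2j}=1+\mu_j^2$ together with the sharper Hankel-type constraints on $\mu_1,\mu_2,\mu_3,\mu_4$), optimized so that $\kappa_j\ge-5/16$ is exactly what is needed.

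The reduction to the one-variable inequality and the central-moment identity are routine; the main obstacle will be that last step---squeezing out the sharp constant $-5/16$ requires carefully tracking the joint constraints on $\mu_j$, $\sigma_j$, $m_3$ and $\kappa_j$ coming from moment-matrix positivity rather than applying Pearson's inequality in isolation. One small point to dispatch first is the degenerate case $\sigma_j^2=0$, where $\wt\olt^2_j\equiv0$ and the covariance vanishes trivially, so that $\sigma_j^2>0$ (hence $\mu_{2j}>0$) may be assumed throughout.
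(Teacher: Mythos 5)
Your reduction is the same one the paper uses: everything comes down to the single-variable inequality $\eta_j\ge2\sigma_j^2\mu_{2j}$, and your central-moment identity
\begin{equation*}
\eta_j-2\sigma_j^2\mu_{2j}=(\kappa_j+2)\sigma_j^4+2\mu_j\gamma_j\sigma_j^3 ,
\end{equation*}
where $\gamma_j\sigma_j^3=\e\bigl((g_j(x_j)-\mu_j)^3\bigr)$, is \emph{correct} (check: for $g_j\sim\dnorm(\mu,1)$ it gives $2$, matching a direct computation of $\cov(Y^2,(Y-Z)^2)$). The obstacle you flag at the end, however, is not merely the hard part of the proof --- it is insurmountable. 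After standardizing to $\sigma_j=1$ you must show $\kappa_j+2+2\mu_j\gamma_j\ge0$, and since $\gamma_j$ and $\kappa_j$ are location-invariant while $\mu_j$ is a free translation parameter, no ``Hankel-type'' moment constraint ties $|\mu_j|$ to $\sqrt{\kappa_j+2}$: shifting $g_j$ by a large constant leaves $\gamma_j,\kappa_j$ fixed and drives $2\mu_j\gamma_j$ to $-\infty$ whenever $\gamma_j\ne0$ has the opposite sign. The paper's own proof escapes this only through an algebra slip: it records $\eta-2\sigma^2\mu_{2y}=(\kappa+2)\sigma^4+2\mu\sigma^3\gamma+\mu^2\sigma^2$ and then standardizes the spurious last term to $\mu_*^4$; minimizing $\theta^2-2\mu_*\theta+\mu_*^4$ over $\mu_*$ is precisely what manufactures the threshold $\kappa=-5/16$ (the minimum of that quartic is $27/16-2$). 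With the correct identity --- yours --- that stabilizing term is absent and the optimization has no finite minimum.

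Consequently the corollary as stated fails, and your (correct) calculation is what exposes it. Take $d=2$ and let $g_1(x_1)$ and $g_2(x_2)$ each have the two-point law $P(g=7/3)=0.9$, $P(g=-1)=0.1$. Then $\mu=2$, $\sigma^2=1$, $\mu_{2}=5$, $\gamma=-8/3$, $\kappa=46/9\ge-5/16$, and a direct computation gives $\eta=\e\bigl(g(x)^2(g(x)-g(z))^2\bigr)=58/9$, so $\eta-2\sigma^2\mu_{2}=-32/9<0$. Hence $\cov(\wt\olt^2_1,\wt\olt^2_2)=\tfrac1{4N}\bigl(\eta^2-4\sigma^4\mu_2^2\bigr)=\tfrac1{4N}\bigl(3364/81-100\bigr)<0$ and $\var(\tilde\delta)<\var(\hat\delta)$, contradicting both conclusions of the corollary. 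So: your proposal is faithful to the intended argument and its first two steps are sound, but the final step cannot be completed because the statement needs an additional hypothesis (e.g.\ $\mu_j\gamma_j\ge0$, or a bound on $|\mu_j|/\sigma_j$ relative to $\sqrt{\kappa_j+2}$); under the stated hypothesis alone the claim is false. You should not try to force the $-5/16$ constant through moment-matrix positivity --- it is an artifact of the erroneous $\mu_*^4$ term, not a feature of the true inequality.
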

\begin{proof}
It suffices to show that
$\eta_j>2\sigma^2_j\mu_{2j}$ for $j=1,\dots,d$.
Let $Y=g_j(x_j)$ for $x_j\sim P_j$
have mean $\mu$, uncentered moments
 $\mu_{2y}$, $\mu_{3y}$ and $\mu_{4y}$
of orders $2$, $3$ and $4$, respectively,
variance $\sigma^2$, skewness $\gamma$, and kurtosis $\kappa$.
Now let $\eta = \mu_{4y}-2\mu\mu_{3y}+\mu_{2y}^2$.
This simplifies to
$$\eta
=
(\kappa+2)\sigma^4+2\mu\sigma^3\gamma +2\mu^2\sigma^2+\sigma^4$$
and so
$$
\eta-2\sigma^2\mu_{2y}=
(\kappa+2)\sigma^4+2\mu\sigma^3\gamma +\mu^2\sigma^2.
$$

If $\sigma=0$ then $\eta-2\sigma^2\mu_{2y}=0$ and so
we suppose that $\sigma>0$.
Replacing $Y$ by $Y/\sigma$ does not change the sign of $\eta-2\sigma^2\mu_{2y}$.
It becomes
$\kappa+2+2\mu_*\gamma +\mu_*^4$ for $\mu_*=\mu/\sigma$.
If $\gamma$ and $\mu_*$ have equal signs, then
$\kappa+2+2\mu_*\gamma +\mu_*^4\ge0$, so we consider
the case where they have opposite signs. Without loss of generality
we take $\gamma < 0 <\mu_*$.
An inequality of \cite{roha:szek:1989}
shows that $|\gamma|\le \sqrt{\kappa+2}$ and so
\begin{align}\label{eq:postrohaszek}
\kappa+2+2\mu_*\gamma +\mu_*^4
&\ge\theta^2-2\mu_*\theta +\mu_*^4
\end{align}
for $\theta=\sqrt{\kappa+2}$.
Equation~\eqref{eq:postrohaszek} is minimized over $\mu_*\ge0$
at $\mu_* = (\theta/2)^{1/3}$ and so
$\kappa+2+2\mu_*\gamma +\mu_*^4
\ge\theta^2
+\bigl(2^{-4/3}-2^{2/3}\bigr)\theta^{4/3}$.
One last variable change to $\theta = (2\lambda)^3$
gives
$$\kappa+2+2\mu_*\gamma +\mu_*^4
\ge \lambda^4(4\lambda^2-3).$$
This is nonnegative for $\lambda\ge(3/4)^{1/2}$,
equivalently $\theta\ge 2(3/4)^{3/2}$ and finally
for $\kappa\ge -5/16$.
\end{proof}

From the above discussion we can see that large kurtoses
and hence large values of $\mu_{4j}=\e( g_j(x_j)^4)$ create difficulties.
In this light we can compare winding stairs to the
radial sampler.  The covariances in the radial sampler
involve a product of $d-2$ of the $\mu_{4j}$.
The winding stairs estimates involve products
of fewer of those quantities. For truncated winding
stairs the $j,k$-covariance include a product of only
$d-k+j-1$ of them. The values $\mu_{4\ell}$ for $\ell$
nearest to $1$ and $d$ appear the most often and so
the ordering of the variables makes a difference.
For regular winding stairs some additional fourth
moments appear in a second term.




\section{Example: MNIST classification}\label{sec:neural}

In this section, we investigate the mean dimension
of a neural network classifier that predicts a
digit in $\{0,1,\dots,9\}$ based on an image
of $784$ pixels.
We compare algorithms for finding mean dimension,
investigate some mean dimensions, and then plot
some images of Sobol' indices.

The MNIST data set from
\url{http://yann.lecun.com/exdb/mnist/}
is a very standard benchmark problem for neural
networks.  It consists
of 70,000 images of hand written digits that
were size-normalized and centered within
$28\times 28$ pixel gray scale images.
We normalize the image values to the unit interval, $[0,1]$.
The prediction problem is to identify which
of the ten digits `0', `1', $\dots$, '9' is in one of
the images based on $28^2=784$ pixel values.
We are interested in the mean dimension of a fitted
prediction model.

The model we used is a convolutional neural network
fit via tensorflow
\citep{abadi2016tensorflow}.
The architecture applied the following steps
to the input pixels in order:
\begin{compactenum}[\quad\bf1)]
\item
   a convolutional layer (with 28 kernels, each of size 3x3),
\item
   a max pooling layer (over 2x2 blocks),
\item
   a flattening layer,
\item
   a fully connected layer with 128 output neurons (ReLU activation),
\item
   a dropout layer (node values were set to 0 with probability 0.2), and
\item
   a final fully connected layer with 10 output neurons (softmax activation).
\end{compactenum}
This model is from~\cite{yalcin2018image} who also defines those terms.
The network was trained
using 10 epochs of ADAM optimization, also described in~\cite{yalcin2018image},
on 60,000 training images.
For our purposes, it is enough to know that it is a complicated
black box function of $784$ inputs.
The accuracy on 10,000 held out images was 98.5\%.  This is not necessarily
the best accuracy attained for this problem, but we consider it good enough
to make the prediction function worth investigating.

There are $2^{784}-1>10^{236}$ nontrivial sets of pixels, each making
their own contribution to the prediction functions, but the mean
dimension can be estimated by summing only $784$ Sobol' indices.

We view the neural network's prediction as a function on $784$
input variables $\bsx$.
For data $(\bsx,Y)$ where
$Y\in\{0,1,\dots,9\}$ is the true digit of the image,
the estimated probability
that $Y=y$ is given by
$$
f_y(\bsx) =
\frac{\exp( g_y(\bsx)) }{\sum_{\ell=0}^9\exp(g_\ell(\bsx))}.
$$
for functions $g_y$, $0\le y\le 9$.
This last step, called the softmax layer, exponentiates
and normalizes functions $g_y$ that implement the prior layers.
We study the mean dimension of $g_0,\dots,g_9$
as well as the mean dimensions of $f_0,\dots,f_9$.
Studying the complexity of predictions via the inputs
to softmax has been done earlier by
\cite{yosi:etal:2015}.

To compute mean dimension we need to
have a model for $\bsx$ with $784$ independent
components. Real images are only on or near a
very small manifold within $\real^{784}$.
We considered several distributions $P_j$
for the value of pixel $j$:
$\dustd\{0,1\}$ (salt and pepper)
$\dustd[0,1]$ (random gray),
independent resampling from per pixel
histograms of all images,
and independent resampling per pixel just
from images with a given value of $y\in\{0,1,\dots,9\}$.
The histogram of values for pixel $j$ from those images
is denoted by $h_y(j)$ with $h_y$ representing all $784$ of them.
Figure~\ref{fig:sampleexamples} shows some
sample draws along with one real image.
We think that resampling pixels from images given $y$
is the most relevant of these methods, though ways
to get around the independence assumption would be valuable.
We nonetheless include the other samplers in our computations.

\begin{figure}
\centering
\hspace*{-.35in}
\includegraphics[width=1.1\hsize]{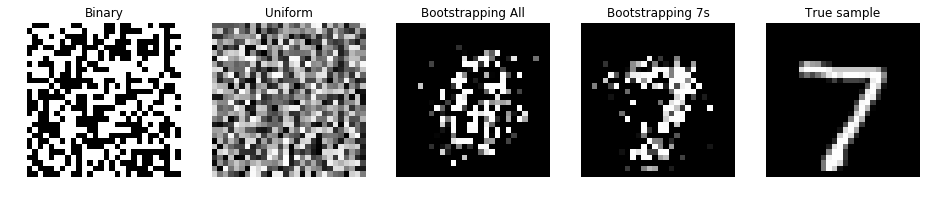}
\caption{\label{fig:sampleexamples}
From left to right: draws from
$\dustd\{0,1\}^{28\times28}$,
$\dustd[0,1]^{28\times28}$,
margins of all images,
margins of all $7$s,
an example $7$.
}
\end{figure}

Our main interest is in comparing the variance of estimates
of $\delta$. We compared the naive method $\hat\delta$,
the radial method $\tilde\delta$ and truncated winding
stairs $\ddot\delta$.
For $\ddot\delta$ our winding stairs algorithm
changed pixels in raster order, left to right within
rows, taking rows of the image from top to bottom.
We omit $\check\delta$ because we think there is
no benefit from its more complicated model and additional correlations.
Our variance comparisons are based on $N=100{,}000$
samples.

\begin{figure}[t]
\centering
\includegraphics[width=.9\hsize]{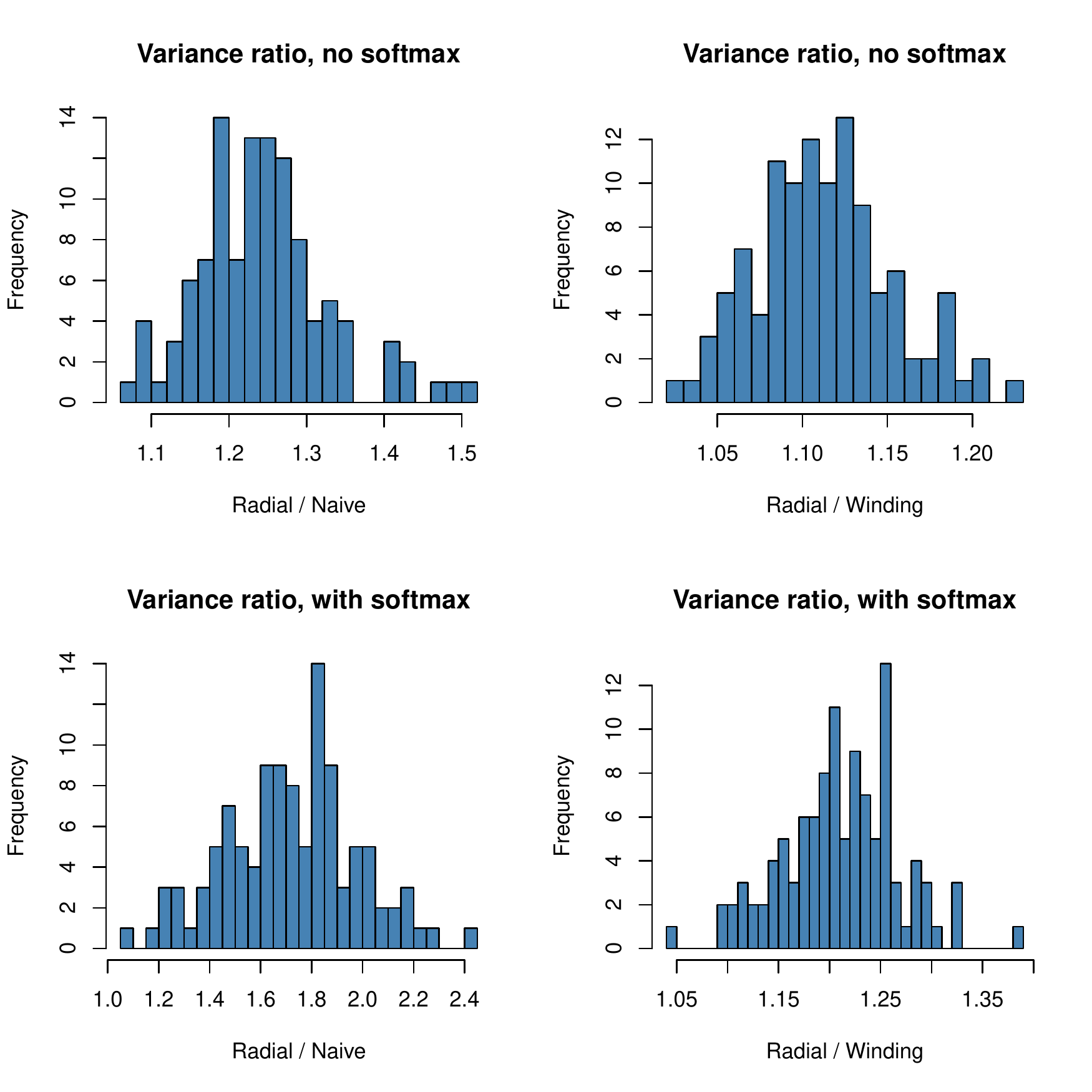}
\caption{\label{fig:varall}
The upper left histogram shows $\var(\tilde\delta)/\var(\hat\delta)$
for functions $g_y$ that exclude softmax.
The upper right histogram shows $\var(\tilde\delta)/\var(\ddot\delta)$.
The bottom two show the same ratios for functions $f_y$ that
include softmax. The histograms include all $10$ values of output $y$,
and all $10$ $y$-specific input histograms and the pooled
input histogram.
}
\end{figure}

Figure~\ref{fig:varall} shows the results for all $10$ output values
$y$, and all $11$ different input histogram distributions.
Ten of those histograms are from resampling pixel values within
categories and the eleventh is a pooled histogram.
There are separate plots
for functions $f_y$ that include softmax and $g_y$ that exclude it.
The radial method always had greater variance than the naive
method. For functions $g_y$ it never had as much as twice the
variance of the naive method, and so the radial method proves
better for $g_y$.  For $f_y$ there were some exceptions where
the naive method is more efficient.
In all of our comparisons the winding stairs method had
lower variance than the radial method, and so for these
functions, (truncated) winding stairs is clearly the best choice.

Figure~\ref{fig:varall} is a summary of $660$ different variance
estimates.  We inspected the variances and found two more
things worth mentioning but not presenting.
The variances were all far smaller using softmax than not, which is not surprising
since softmax compresses the range of $f_y$ to be within $[0,1]$
which will greatly affect the differences that go into estimates of $\delta$.
The variances did not greatly depend on the input distribution.
While there were some statistically significant differences,
which is almost inevitable for such large $N$,
the main practical difference was that variances tended to be much
smaller when sampling from $h_1$.  We believe that this is because images for $y=1$ have much less
total illumination than the others.

While our main purpose is to compare estimation strategies
for mean dimension, the mean dimensions for this problem
are themselves of interest.
Table~\ref{tab:withsoftmax} shows mean dimensions for
functions $f_y$ that include softmax
as estimated via winding stairs.  For this we used $N=10^6$ when
resampling from images $h_0,\dots,h_9$ and $N=2\times 10^6$ otherwise.
The first thing to note is an impossible estimate of $\nu(f_1)$
for binary and uniform sampling. The true $\nu(f_1)$ cannot be larger than $784$.
The function $f_1$ has tiny variance under those distributions and recall
that $\nu=\delta/\sigma^2$.
Next we see that moving from binary to uniform to the combined histogram
generally lowers the mean dimension.
Third, for the $y$-specific histograms $h_y$ we typically see smaller mean
dimensions for $f_y$ with the same $y$ that was used in sampling.
That is, the diagonal of the lower block tends to have smaller values.

\begin{table}
  \begin{tabular}{r|rrrrrrrrrrrrrr}
Sampler         & 0             & 1                     & 2             & 3             & 4             & 5             & 6
        & 7             & 8             & 9     \\
\hline
binary          & 11.07 & 936.04        & 10.43 & 9.92  & 18.69 & 10.22 & 13.27 & 13.37 & 8.67  & 16.54 \\
uniform         & 6.92  & 4,108.99      & 7.28  & 6.60  & 9.90  & 7.03  & 6.92  & 8.03  & 5.61  & 9.48  \\
combined        & 8.77  & 4.68          & 4.06  & 3.95  & 4.56  & 5.11  & 7.62  & 4.62  & 3.43  & 7.39  \\
\hline
0                       & 3.52  & 6.81          & 3.48  & 7.20  & 6.56  & 5.78  & 7.54  & 4.67  & 4.04  & 9.08  \\
1                       & 36.12 & 2.88          & 6.00  & 3.43  & 7.75  & 3.76  & 8.74  & 7.60  & 2.83  & 5.58  \\
2                       & 10.03 & 3.86          & 3.68  & 4.70  & 8.23  & 12.27 & 12.57 & 7.20  & 4.31  & 17.23 \\
3                       & 23.20 & 4.69          & 5.95  & 4.10  & 6.96  & 6.72  & 13.63 & 7.10  & 4.42  & 9.00  \\
4                       & 7.42  & 8.39          & 7.59  & 9.96  & 3.81  & 7.63  & 8.57  & 5.35  & 3.86  & 6.82  \\
5                       & 8.12  & 4.77          & 5.72  & 4.82  & 5.60  & 3.48  & 7.61  & 7.28  & 3.54  & 7.87  \\
6                       & 9.22  & 5.65          & 4.36  & 6.52  & 4.31  & 6.67  & 3.57  & 6.43  & 4.28  & 11.99 \\
7                       & 8.57  & 5.85          & 4.42  & 4.09  & 4.66  & 5.09  & 3.59  & 3.59  & 4.29  & 5.58  \\
8                       & 19.58 & 6.06          & 4.54  & 4.77  & 8.21  & 6.28  & 13.15 & 6.72  & 4.20  & 10.11 \\
9                       & 7.47  & 7.00          & 5.25  & 4.96  & 3.15  & 4.52  & 7.34  & 3.74  & 2.92  & 3.48\\
\hline
\end{tabular}
\caption{\label{tab:withsoftmax}
Estimated mean dimension of functions $f_y$ using softmax.
}
\end{table}

Table~\ref{tab:withoutsoftmax} shows mean dimensions for
functions $g_y$ that exclude softmax
as estimated via winding stairs.
They are all in the range from $1.35$ to $1.92$.
We found no particular problem with the function $g_1$ like we saw for $f_1$.
While the functions $g_y$ that are sent into softmax were obtained
by a very complicated process, they do not make much use of very
high order interactions.  There must be a significantly large component
of additive functions and two factor interactions within them. There may be a small
number of large high order interactions but they do not dominate
any of the functions $f_y$ under any of the sampling distributions we use.
The softmax function begins by exponentiating $f_y$ which we can think
of as changing a function with a lot of additive structure into one with
a lot of multiplicative structure.  Multiplicative functions can have quite
high mean dimension.

The measured mean dimensions of $g_y$ are pretty stable as the sampling
distribution changes.  While the manifold of relevant images is likely to
be quite small, it is reassuring that $13$ different independent data distributions
give largely consistent and small mean dimensions.

\begin{table}
\begin{tabular}{r|cccccccccc}
Sampler         & 0             & 1                     & 2             & 3             & 4             & 5             & 6
        & 7             & 8             & 9     \\
\hline
binary          & 1.66  & 1.76  & 1.74  & 1.72  & 1.73  & 1.79  & 1.75  & 1.69  & 1.74  & 1.79 \\
uniform         & 1.65  & 1.62  & 1.66  & 1.66  & 1.67  & 1.71  & 1.71  & 1.61  & 1.68  & 1.70 \\
combined        & 1.79  & 1.77  & 1.70  & 1.73  & 1.73  & 1.90  & 1.88  & 1.78  & 1.90  & 1.89 \\
\hline
0                       & 1.92  & 1.65  & 1.68  & 1.69  & 1.65  & 1.80  & 1.86  & 1.56  & 1.68  & 1.81 \\
1                       & 1.48  & 1.56  & 1.35  & 1.61  & 1.62  & 1.57  & 1.49  & 1.42  & 1.56  & 1.50 \\
2                       & 1.55  & 1.66  & 1.62  & 1.74  & 1.57  & 1.72  & 1.67  & 1.61  & 1.78  & 1.59 \\
3                       & 1.56  & 1.65  & 1.59  & 1.58  & 1.63  & 1.85  & 1.59  & 1.64  & 1.67  & 1.66 \\
4                       & 1.87  & 1.62  & 1.61  & 1.55  & 1.70  & 1.75  & 1.76  & 1.66  & 1.57  & 1.78 \\
5                       & 1.71  & 1.60  & 1.59  & 1.63  & 1.72  & 1.78  & 1.74  & 1.62  & 1.76  & 1.90 \\
6                       & 1.65  & 1.60  & 1.60  & 1.66  & 1.68  & 1.70  & 1.65  & 1.60  & 1.54  & 1.63 \\
7                       & 1.73  & 1.59  & 1.61  & 1.63  & 1.60  & 1.62  & 1.65  & 1.57  & 1.59  & 1.63 \\
8                       & 1.73  & 1.65  & 1.60  & 1.64  & 1.66  & 1.78  & 1.75  & 1.64  & 1.84  & 1.75 \\
9                       & 1.86  & 1.68  & 1.61  & 1.63  & 1.73  & 1.80  & 1.86  & 1.67  & 1.69  & 1.82\\
\hline
\end{tabular}
\caption{\label{tab:withoutsoftmax}
Estimated mean dimension of functions $g_y$ without softmax.
}
\end{table}

Figure~\ref{fig:sometau} shows some Sobol' indices
of $f_y$ and $g_y$ for $y\in\{0,1,\dots,9\}$ when sampling
from $h_0$.  In each set of $10$ images, the gray scale goes
from black for $0$ to white for the largest intensity in any of those $10$
images.  As a consequence some of the images are almost entirely black.

The lower indices $\ult^2_j$ depict the importance
of inputs one at a time.  This is similar to what one gets
from a gradient,
see for instance  Grad-cam \citep{selvaraju2017grad},
except that $\ult^2_j$ is global over the whole range of the
input instead of local like a gradient.
Upper indices $\olt^2_j$ depict the importance
of each pixel combining all of the interactions
to which it contributes, not just its main effect.

\begin{figure}[t!]
\centering
\includegraphics[width=.9\hsize]{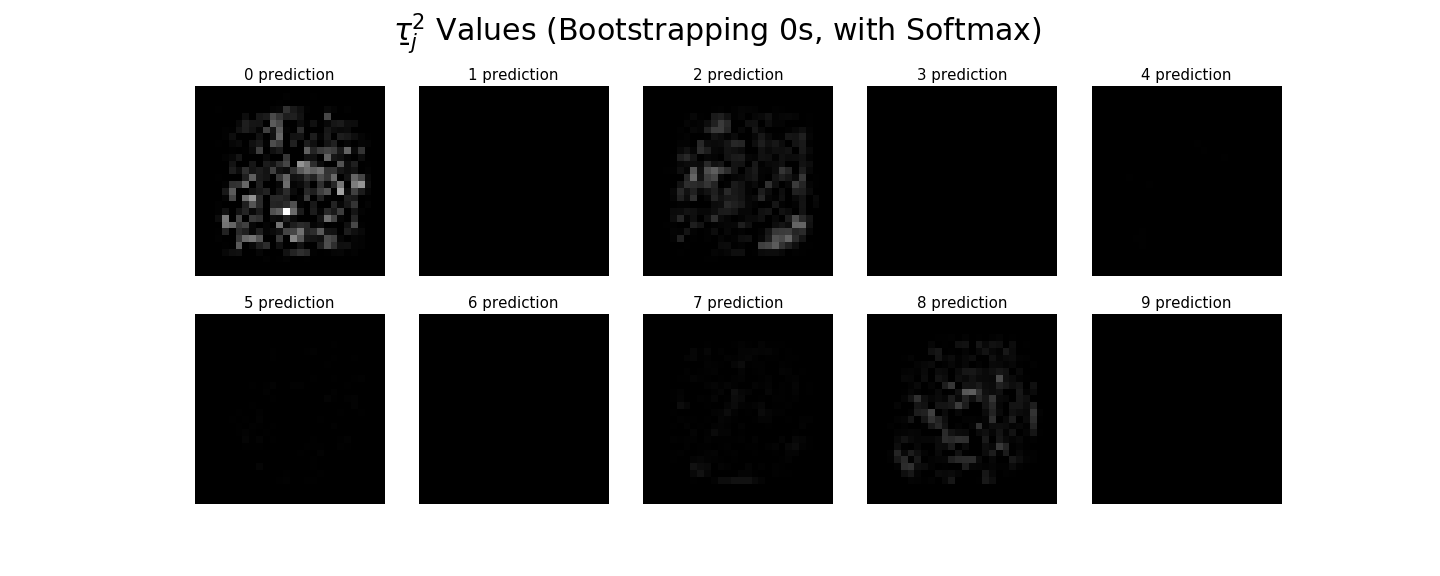}\\[-1ex]
\includegraphics[width=.9\hsize]{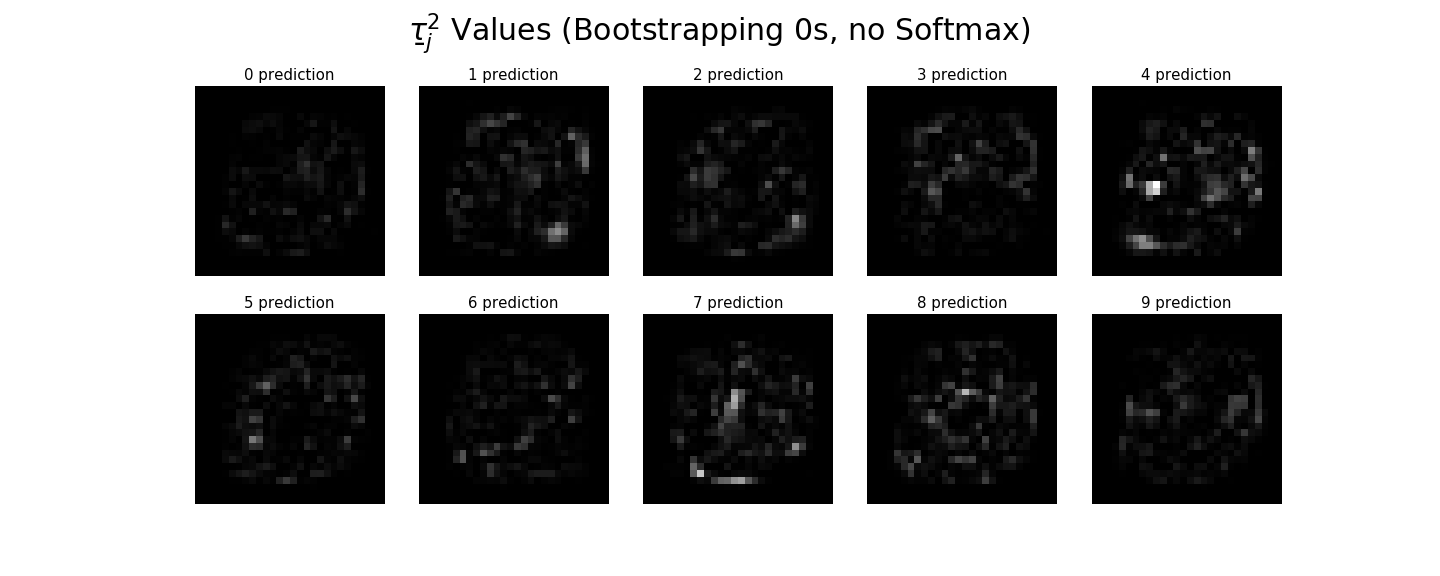}\\[-1ex]
\includegraphics[width=.9\hsize]{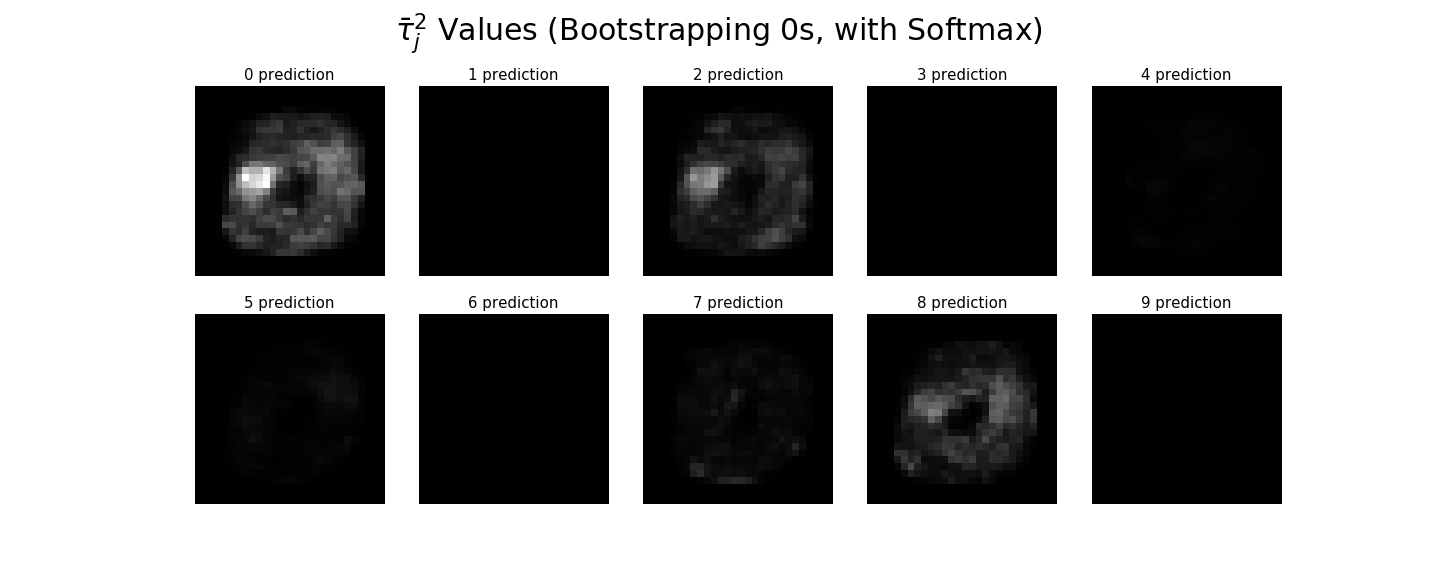}\\[-1ex]
\includegraphics[width=.9\hsize]{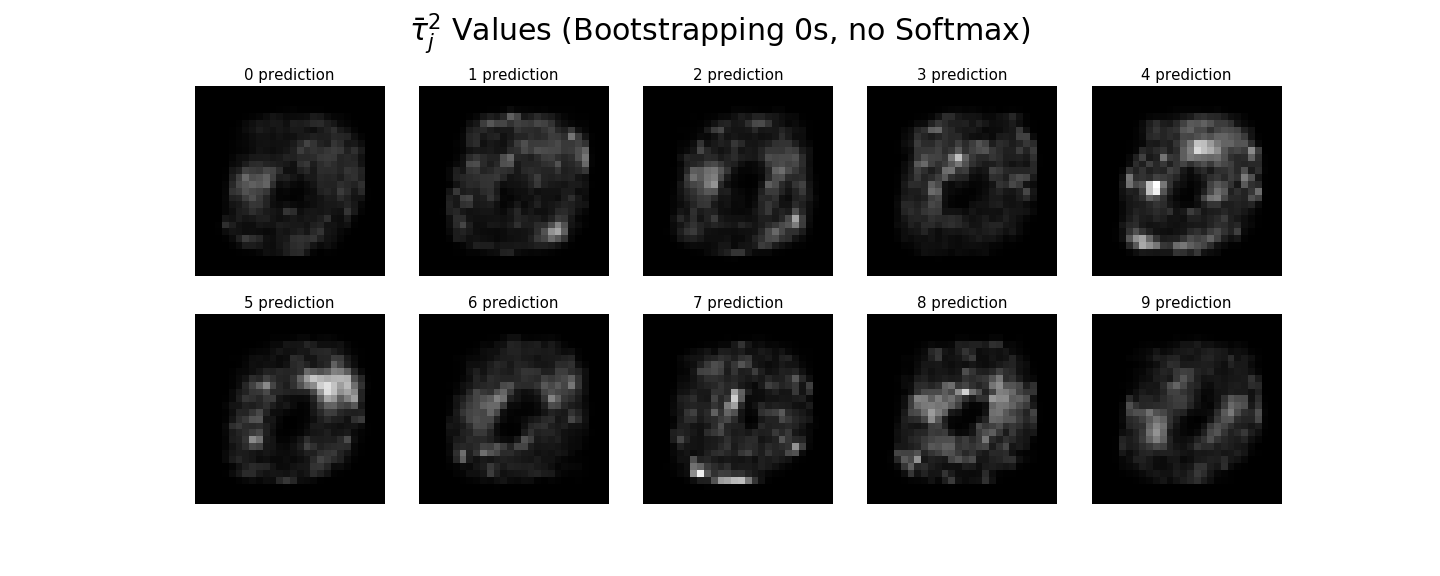}
\caption{\label{fig:sometau}
From top to bottom:
maps of $\ult^2_j(f_y)$, $\ult^2_j(g_y)$,
$\olt^2_j(f_y)$ and $\olt^2_j(g_y)$ versus pixels $j$
when sampling from $h_0$.
}
\end{figure}

For the influence on $f_0$ when sampling from $h_0$, the difference between
$\ult^2_j$ and $\olt^2_j$ is in that bright spot just left of the center of
the image. That is the region of pixels involved in the most interactions.
It appears to be involved in distinguishing $0$s from $2$s and $8$s
because that region is also bright for functions $f_2$ and $f_8$.
Without softmax that bright spot for $\olt^2_j$  is lessened and so we
see that much though not all of its interaction importance was introduced
by the softmax layer. For $g_5$ when sampling from $h_0$ we see that a region just Northeast
of the center of the image has the most involvement in interactions
as measured by $\olt^2_j$.

\section{Discussion}\label{sec:discussion}

We have found that the strategy under which differences of function values
are collected can make a big difference to the statistical efficiency
of estimates of mean dimension.  Computational efficiency in reusing
function values can increase some correlations enough to more
than offset that advantage. Whether this happens depends on the function
involved. We have seen examples where high kurtoses make the problem worse.

Our interest in mean dimension leads us to consider sums of $\olt^2_j$.
In other uncertainty quantification problems we are interested in comparing
and ranking $\olt^2_j$.  For a quantity like $\hat\olt^2_j-\hat\olt^2_k$
we actually prefer a large positive value for $\cov(\hat\olt^2_j,\hat\olt^2_k)$.
In this case, the disadvantages we described for the radial method
become a strength.  Correlation effects are more critical
for mean dimension than for these differences of Sobol'
indices, because mean dimension is affected by $O(d^2)$ covariances, not just one.

The radial strategy and the truncated winding stairs strategy can both be
represented in terms of a tree structure connecting $d+1$ function values.
There is a one to one correspondence between the $d$ edges in that tree and
the components of $\bsx$ getting changed.
There is no particular reason to think that either of these strategies is
the optimal graph structure or even the optimal tree.

The mean dimension derives from an ANOVA decomposition that in turn is based
on models with independent inputs.  There has been work on ANOVA for
dependent inputs, such as \cite{ston:1994}, \cite{hook:2007}
and \cite{chas:gamb:prie:2012,chas:gamb:prie:2015}. The underlying
models require the density to have an unrealistically strong absolute continuity
property with respect to a product measure that makes them unrealistic
for the MNIST example.
There are also approaches to global sensitivity analysis
based on Shapley values, that do not require independence
of the underlying variables
\citep{song:nels:stau:2016,shapleydependent}.

Recent work by \cite{hart:grem:2018} shows how to define some Sobol' indices
directly without recourse to the ANOVA and that may provide a basis for
mean dimension without ANOVA.
\cite{kuch:tara:anno:2012} have a copula based approach to Sobol' indices
on dependent data, though
finding a specific copula that describes points near a manifold would be hard.

We have studied the accuracy of mean dimension estimates as if
the sampling were done by plain Monte Carlo (MC).
When $P$ is the uniform distribution on $[0,1]^d$
then we can instead use randomized quasi-Monte Carlo (RQMC)
sampling, surveyed in \cite{lecu:lemi:2002}.  The naive method can be implemented using $N$ points in $[0,1]^{d+1}$ for each of $j=1,\dots,d$.
The first column of the $j$'th input matrix could
contain $\bsz_{ij}$ for $i=1,\dots,N$ while the remaining $d$
columns would have $\bsx_i^{(j)}\in[0,1]^d$.
The $d+1$'st point contains the values $\bsx_{i,j}$.
The radial method can be implemented with $N$ points in $[0,1]^{2d}$
with the first $d$ columns providing $\bsx_i$ and the second $d$
columns providing $\bsz_i$, both for $i=1,\dots,N$.
Truncated winding stairs, similarly requires $N$ points in $[0,1]^{2d}$.
For RQMC sampling by scrambled nets, the resulting variance is $o(1/N)$.
A reasonable choice is to use RQMC in whichever method one thinks
would have the smallest MC variance.  The rank ordering
of RQMC variances could however be different from that of MC and
it could even change with $N$, so results on MC provide only
a suggestion of which method would be best for RQMC.

A QMC approach to
plain winding stairs would require QMC methods
designed specifically for MCMC sampling.   See for instance,  one
based on completely uniformly distributed
sequences described in \cite{qmc4mcmc}.

We have used a neural network black box function to illustrate our computations.
It is yet another example of an extremely complicated function that nonetheless
is dominated by low order interactions.
In problems like this where the input images had a common registration
an individual pixel has some persistent meaning between images
and then visualizations of $\ult^2_j$ can be informative.
Many neural network problems are applied to data that have not
been so carefully registered as the MNIST data.  For those problems
the link from predictions back to inputs may need to be explored
in a different way.

\section*{Acknowledgments}
This work was supported by
a grant from Hitachi Limited and by
the US National Science Foundation
under grant IIS-1837931.
We thank Masayoshi Mase of Hitachi for helpful discussions about
variable importance and explainable AI.
We also thanks anonymous reviewers for suggestions that have
improved our presentation.
\bibliographystyle{apalike}
\bibliography{sensitivity}

\begin{thebibliography}{}

\bibitem[Abadi et~al., 2016]{abadi2016tensorflow}
Abadi, M., Barham, P., Chen, J., Chen, Z., Davis, A., Dean, J., Devin, M.,
  Ghemawat, S., Irving, G., Isard, M., Kudlur, M., Levenberg, J., Monga, R.,
  Moore, S., Murray, D.~G., Steiner, B., Tucker, P., Vasudevan, V., Warden, P.,
  Wicke, M., Yu, Y., and Zheng, X. (2016).
\newblock Tensorflow: A system for large-scale machine learning.
\newblock In {\em 12th {USENIX} Symposium on Operating Systems Design and
  Implementation (OSDI '16)}, pages 265--283.

\bibitem[Borgonovo and Plischke, 2016]{borg:plis:2016}
Borgonovo, E. and Plischke, E. (2016).
\newblock Sensitivity analysis: a review of recent advances.
\newblock {\em European Journal of Operational Research}, 248(3):869--887.

\bibitem[Campolongo et~al., 2011]{camp:salt:cri:2011}
Campolongo, F., Saltelli, A., and Cariboni, J. (2011).
\newblock From screening to quantitative sensitivity analysis. a unified
  approach.
\newblock {\em Computer Physics Communications}, 182(4):978--988.

\bibitem[Chastaing et~al., 2012]{chas:gamb:prie:2012}
Chastaing, G., Gamboa, F., and Prieur, C. (2012).
\newblock Generalized {Hoeffding-Sobol'} decomposition for dependent variables
  -- applications to sensitivity analysis.
\newblock {\em Electronic Journal of Statistics}, 6:2420--2448.

\bibitem[Chastaing et~al., 2015]{chas:gamb:prie:2015}
Chastaing, G., Gamboa, F., and Prieur, C. (2015).
\newblock Generalized {Sobol'} sensitivity indices for dependent variables:
  Numerical methods.
\newblock {\em Journal of Statistical Computation and Simulation},
  85(7):1306--1333.

\bibitem[Efron and Stein, 1981]{efro:stei:1981}
Efron, B. and Stein, C. (1981).
\newblock The jackknife estimate of variance.
\newblock {\em Annals of Statistics}, 9(3):586--596.

\bibitem[Glen and Isaacs, 2012]{glen:isaa:2012}
Glen, G. and Isaacs, K. (2012).
\newblock Estimating {Sobol'} sensitivity indices using correlations.
\newblock {\em Environmental Modelling \& Software}, 37:157--166.

\bibitem[Hart and Gremaud, 2018]{hart:grem:2018}
Hart, J. and Gremaud, P.~A. (2018).
\newblock An approximation theoretic perspective of {Sobol'} indices with
  dependent variables.
\newblock {\em International Journal for Uncertainty Quantification}, 8(6).

\bibitem[Hoeffding, 1948]{hoef:1948}
Hoeffding, W. (1948).
\newblock A class of statistics with asymptotically normal distribution.
\newblock {\em Annals of Mathematical Statistics}, 19:293--325.

\bibitem[Homma and Saltelli, 1996]{homm:salt:1996}
Homma, T. and Saltelli, A. (1996).
\newblock Importance measures in global sensitivity analysis of nonlinear
  models.
\newblock {\em Reliability Engineering \& System Safety}, 52(1):1--17.

\bibitem[Hooker, 2012]{hook:2007}
Hooker, G. (2012).
\newblock Generalized functional {ANOVA} diagnostics for high-dimensional
  functions of dependent variables.
\newblock {\em Journal of Computational and Graphical Statistics}.

\bibitem[Hoyt and Owen, 2020]{hoyt:owen:2020}
Hoyt, C.~R. and Owen, A.~B. (2020).
\newblock Mean dimension of ridge functions.
\newblock {\em {SIAM} Journal on Numerical Analysis}, 58(2):1195--1216.

\bibitem[Iooss and Lema\^itre, 2015]{ioos:lema:2015}
Iooss, B. and Lema\^itre, P. (2015).
\newblock A review on global sensitivity analysis methods.
\newblock In Dellino, G. and Meloni, C., editors, {\em Uncertainty management
  in simulation-optimization of complex systems}, pages 101--122. Springer.

\bibitem[Janon et~al., 2014]{jano:klei:lagn:node:prie:2014}
Janon, A., Klein, T., Lagnoux, A., Nodet, M., and Prieur, C. (2014).
\newblock Asymptotic normality and efficiency of two {Sobol'} index estimators.
\newblock {\em ESAIM: Probability and Statistics}, 18:342--364.

\bibitem[Jansen, 1999]{jans:1999}
Jansen, M. J.~W. (1999).
\newblock Analysis of variance designs for model output.
\newblock {\em Computer Physics Communications}, 117(1--2):35--43.

\bibitem[Jansen et~al., 1994]{jans:etal:1994}
Jansen, M. J.~W., Rossing, W. A.~H., and Daamen, R.~A. (1994).
\newblock {Monte Carlo} estimation of uncertainty contributions from several
  independent multivariate sources.
\newblock In Gasman, J. and {van Straten}, G., editors, {\em Predictability and
  nonlinear modelling in natural sciences and economics}, pages 334--343.
  Kluwer Academic Publishers.

\bibitem[Kucherenko et~al., 2012]{kuch:tara:anno:2012}
Kucherenko, S., Tarantola, S., and Annoni, P. (2012).
\newblock Estimation of global sensitivity indices for models with dependent
  variables.
\newblock {\em Computer physics communications}, 183(4):937--946.

\bibitem[L'Ecuyer and Lemieux, 2002]{lecu:lemi:2002}
L'Ecuyer, P. and Lemieux, C. (2002).
\newblock A survey of randomized quasi-{M}onte {C}arlo methods.
\newblock In Dror, M., L'Ecuyer, P., and Szidarovszki, F., editors, {\em
  Modeling Uncertainty: An Examination of Stochastic Theory, Methods, and
  Applications}, pages 419--474. Kluwer Academic Publishers.

\bibitem[Liu and Owen, 2006]{meandim}
Liu, R. and Owen, A.~B. (2006).
\newblock Estimating mean dimensionality of analysis of variance
  decompositions.
\newblock {\em Journal of the American Statistical Association},
  101(474):712--721.

\bibitem[Mauntz, 2002]{maun:2002}
Mauntz, W. (2002).
\newblock Global sensitivity analysis of general nonlinear systems.
\newblock Master's thesis, Imperial College.

\bibitem[Monod et~al., 2006]{mono:naud:mao:2006}
Monod, H., Naud, C., and Makowki, D. (2006).
\newblock Uncertainty and sensitivity analysis for crop models.
\newblock In Wallach, D., Makowski, D., and Jones, J.~W., editors, {\em Working
  with dynamic crop models: evaluation, analysis, parametrization and
  examples}, pages 55--99. Elsevier.

\bibitem[Owen, 2003]{dimdist}
Owen, A.~B. (2003).
\newblock The dimension distribution and quadrature test functions.
\newblock {\em Statistica Sinica}, 13(1):1--17.

\bibitem[Owen and Prieur, 2017]{shapleydependent}
Owen, A.~B. and Prieur, C. (2017).
\newblock On {Shapley} value for measuring importance of dependent inputs.
\newblock {\em SIAM/ASA Journal on Uncertainty Quantification}, 5(1):986--1002.

\bibitem[Owen and Tribble, 2005]{qmc4mcmc}
Owen, A.~B. and Tribble, S.~D. (2005).
\newblock A quasi-{Monte Carlo} {Metropolis} algorithm.
\newblock {\em Proceedings of the National Academy of Sciences},
  102(25):8844--8849.

\bibitem[Puy et~al., 2020]{puy:etal:2020}
Puy, A., Becker, W., Piano, S.~L., and Saltelli, A. (2020).
\newblock The battle of total-order sensitivity estimators.
\newblock Technical Report arXiv:2009.01147v2, Princeton University.

\bibitem[Rohatgi and Sz{\'e}kely, 1989]{roha:szek:1989}
Rohatgi, V.~K. and Sz{\'e}kely, G. (1989).
\newblock Sharp inequalities between skewness and kurtosis.
\newblock {\em Statistics \& probability letters}, 8(4):297--299.

\bibitem[Saltelli, 2002]{salt:2002}
Saltelli, A. (2002).
\newblock Making best use of model evaluations to compute sensitivity indices.
\newblock {\em Computer Physics Communications}, 145:280--297.

\bibitem[Saltelli et~al., 2010]{salt:anno:azzi:camp:ratt:tara:2010}
Saltelli, A., Annoni, P., Azzini, I., Campolongo, F., Ratto, M., and Tarantola,
  S. (2010).
\newblock Variance based sensitivity analysis of model output. {Design} and
  estimator for the total sensitivity index.
\newblock {\em Computer Physics Communications}.

\bibitem[Saltelli et~al., 2008]{salt:ratt:andr:camp:cari:gate:sais:tara:2008}
Saltelli, A., Ratto, M., Andres, T., Campolongo, F., Cariboni, J., Gatelli, D.,
  Saisana, M., and Tarantola, S. (2008).
\newblock {\em Global Sensitivity Analysis. The Primer}.
\newblock John Wiley \& Sons, Ltd, New York.

\bibitem[Saltelli and Sobol', 1995]{salt:sobo:1995}
Saltelli, A. and Sobol', I.~M. (1995).
\newblock About the use of rank transformation in sensitivity analysis of model
  output.
\newblock {\em Reliability Engineering \& System Safety}, 50(3):225--239.

\bibitem[Selvaraju et~al., 2017]{selvaraju2017grad}
Selvaraju, R.~R., Cogswell, M., Das, A., Vedantam, R., Parikh, D., and Batra,
  D. (2017).
\newblock Grad-cam: Visual explanations from deep networks via gradient-based
  localization.
\newblock In {\em Proceedings of the IEEE International Conference on Computer
  Vision}, pages 618--626.

\bibitem[Sobol', 1969]{sobo:1969}
Sobol', I.~M. (1969).
\newblock {\em Multidimensional Quadrature Formulas and {H}aar Functions}.
\newblock Nauka, Moscow.
\newblock (In Russian).

\bibitem[Sobol', 1990]{sobo:1990}
Sobol', I.~M. (1990).
\newblock On sensitivity estimation for nonlinear mathematical models.
\newblock {\em Matematicheskoe Modelirovanie}, 2(1):112--118.
\newblock (In Russian).

\bibitem[Sobol', 1993]{sobo:1993}
Sobol', I.~M. (1993).
\newblock Sensitivity estimates for nonlinear mathematical models.
\newblock {\em Mathematical Modeling and Computational Experiment}, 1:407--414.

\bibitem[Song et~al., 2016]{song:nels:stau:2016}
Song, E., Nelson, B.~L., and Staum, J. (2016).
\newblock Shapley effects for global sensitivity analysis: Theory and
  computation.
\newblock {\em SIAM/ASA Journal on Uncertainty Quantification},
  4(1):1060--1083.

\bibitem[Stone, 1994]{ston:1994}
Stone, C.~J. (1994).
\newblock The use of polynomial splines and their tensor products in
  multivariate function estimation.
\newblock {\em The Annals of Statistics}, 22(1):118--184.

\bibitem[Yalcin, 2018]{yalcin2018image}
Yalcin, O.~G. (2018).
\newblock Image classification in 10 minutes with {MNIST} dataset.
\newblock
  \url{https://towardsdatascience.com/imageclassification-in-10-minutes-with-mnist-dataset-54c35b77a38d}.

\bibitem[Yosinski et~al., 2015]{yosi:etal:2015}
Yosinski, J., Clune, J., Nguyen, A., Fuchs, T., and Lipson, H. (2015).
\newblock Understanding neural networks through deep visualization.
\newblock Technical Report arXiv:1506.06579.

\end{thebibliography}

\section*{Appendix: Covariances under winding stairs}

Winding stairs expressions are more complicated than
the others and require somewhat different notation.
Hence we employ some notation local to this appendix.
For instance in winding stairs $\ell(i)$ has a special
meaning as newly updated component of $\bsx_i$.
Accordingly when we need a variable index other than $j$
and $k$ we use $t$ instead of $\ell$, in this appendix.
We revert the $t$'s back to $\ell$ when quoting
these theorems in the main body of the paper.
Similarly, differences in function values are more
conveniently described via which observation $i$ is involved
and not which variable.  Accordingly, we work with $\Delta_i$
here instead of $\Delta_j$ in the main body of the article.

We begin with the regular winding
stairs estimates and let
$\Delta_i = f(\bsx_i)-f(\bsx_{i-1})$.
For $i'>i$, the differences $\Delta_i$ and $\Delta_{i'}$
are independent if $\bsx_{i'-1}$ has no common
components with $\bsx_i$.
This happens if $i'-1\ge i+d$, that is if $i'-i>d$.
For any index $i$, the difference $\Delta_i$
may be dependent on $\Delta_{i'}$ for $-d<i'<d$
but no other $\Delta_{i'}$.
It is not necessarily true that
$\cov(\Delta_i^2,\Delta_{i+s}^2)=\cov(\Delta_i^2,\Delta_{i-s}^2)$
because different shared components of $\bsx$ are involved
in these two covariances.

The winding stairs estimate of $\olt^2_j$ is
$
{\check\olt}^2_j = (1/(2N))\sum_{i=1}^N\Delta_{d(i-1)+j}^2.
$
Because $\cov(\Delta_{i+d}^2,\Delta_{i'+d}^2)
=\cov(\Delta_{i}^2,\Delta_{i'}^2)$,
we find that for $1\le j<k\le d$,
\begin{align}\label{eq:covoltwinding}
\cov( \check{\olt}^2_j, \check{\olt}^2_k)
&=
\frac1{4N}\Bigl(
\cov(\Delta_{d+j}^2 ,\Delta_{d+k}^2)
+\cov(\Delta_{2d+j}^2 ,\Delta_{d+k}^2)\Bigr).
\end{align}
The truncated winding stairs algorithm has
\begin{align}\label{eq:covoltwinding}
\cov( \ddot{\olt}^2_j, \ddot{\olt}^2_k)
&=
\frac1{4N}
\cov(\Delta_{d+j}^2 ,\Delta_{d+k}^2)
\end{align}
because $\Delta_{2d+j}$ has no $z$'s in
common with $\Delta_{d+k}$.

\begin{theorem}\label{thm:additivecasewinding}
For the additive function $f_A$ of~\eqref{eq:fadd},
\begin{align}
\var( \check\delta)
&=\frac1N\sum_{j=1}^d\Bigl(2+\frac{\kappa_j}2\Bigr)\sigma^4_j
+\frac{N-1}{2N^2}\sum_{j=1}^d(\kappa_j+2)\sigma^4_j
\label{eq:varaddwindnum}\\
\var( \ddot\delta)
&=\frac1N\sum_{j=1}^d\Bigl(2+\frac{\kappa_j}2\Bigr)\sigma^4_j.\label{eq:varadddiswindnum}
\end{align}
\end{theorem}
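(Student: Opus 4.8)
The plan is to use the fact that for an additive $f_A$ each squared difference $\Delta_r^2$ depends on only a single coordinate's worth of the injected noise, so the winding stairs chain decomposes into $d$ non-interacting ``threads,'' one per coordinate, and the whole problem reduces to a one-dimensional moment calculation handled by Lemma~\ref{lem:mu4s}.

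First I would pin down the building blocks. At step $r=d(i-1)+j$ the updated component is $\ell(r)=j$, so by~\eqref{eq:rofij} the value of component $j$ just before that step is $z_{d(i-2)+j}$ and just after it is $z_{d(i-1)+j}$; hence for the additive model
\begin{align*}
\Delta_{d(i-1)+j}=g_j\bigl(z_{d(i-1)+j}\bigr)-g_j\bigl(z_{d(i-2)+j}\bigr).
\end{align*}
Setting $W^{(j)}_i=g_j(z_{d(i-1)+j})$, the variables $W^{(j)}_0,\dots,W^{(j)}_N$ are i.i.d.\ with variance $\sigma_j^2$ and kurtosis $\kappa_j$ (the index $i=0$ uses the negative-index initialization fixed just after~\eqref{eq:rofij}), and $\check\olt^2_j=(2N)^{-1}\sum_{i=1}^N(W^{(j)}_i-W^{(j)}_{i-1})^2$. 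Because thread $j$ involves only those $z_r$ with $r\equiv j\pmod d$, the $d$ threads are built from disjoint collections of independent variables; thus $\check\olt^2_j\indep\check\olt^2_k$ for $j\ne k$ and $\var(\check\delta)=\sum_{j=1}^d\var(\check\olt^2_j)$, and likewise $\ddot\olt^2_j\indep\ddot\olt^2_k$.

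Next I would evaluate a single thread. Expanding $\var\bigl(\sum_{i=1}^N(W_i-W_{i-1})^2\bigr)$, the $N$ diagonal terms each contribute $(8+2\kappa_j)\sigma_j^4$ by the second line of Lemma~\ref{lem:mu4s}. For the cross terms, $(W_i-W_{i-1})^2$ and $(W_{i'}-W_{i'-1})^2$ share no variable once $|i-i'|\ge2$, so the third identity of Lemma~\ref{lem:mu4s} makes their covariance vanish; only the $N-1$ adjacent pairs $i'=i+1$ survive, where the fourth identity gives $\e\bigl((W_i-W_{i-1})^2(W_{i+1}-W_i)^2\bigr)=(6+\kappa_j)\sigma_j^4$ and hence covariance $(2+\kappa_j)\sigma_j^4$. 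Therefore $\var(\check\olt^2_j)=\tfrac1{4N^2}\bigl[N(8+2\kappa_j)+2(N-1)(2+\kappa_j)\bigr]\sigma_j^4$, whose two pieces are $(2+\kappa_j/2)\sigma_j^4/N$ and $(N-1)(\kappa_j+2)\sigma_j^4/(2N^2)$; summing over $j$ gives~\eqref{eq:varaddwindnum}. For $\ddot\delta$ the chains are $N$ independent replicates of the length-$(d+1)$ block in~\eqref{eq:windingdiagram}, so within thread $j$ of a replicate there is a single squared difference $(g_j(z_{d+j})-g_j(z_j))^2$ and no adjacent-pair terms remain; thus $\var(\ddot\olt^2_j)=(8+2\kappa_j)\sigma_j^4/(4N)$, which sums to~\eqref{eq:varadddiswindnum}.

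The only real obstacle is bookkeeping: reading off correctly from~\eqref{eq:rofij} which noise variables enter each $\Delta_r$ so that the thread decomposition is manifest, and then counting the covariance pairs ($N$ diagonal, $N-1$ adjacent, all others zero) without an off-by-one error, in particular at the $i=1$ boundary where the initialization convention is what makes $W^{(j)}_0$ legitimate. Given the thread structure and Lemma~\ref{lem:mu4s}, the remaining algebra is routine.
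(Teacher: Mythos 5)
Your proof is correct and follows essentially the same route as the paper's: identify that $\Delta_{d(i-1)+j}=g_j(z_{d(i-1)+j})-g_j(z_{d(i-2)+j})$ so the chain splits into $d$ independent one-coordinate threads, then compute $\var(\check\olt^2_j)$ from the $N$ diagonal terms and $N-1$ adjacent-pair covariances via Lemma~\ref{lem:mu4s}, with all cross terms vanishing for the truncated version. The bookkeeping ($N$ variances of $(8+2\kappa_j)\sigma_j^4$ plus $2(N-1)$ ordered adjacent pairs with covariance $(2+\kappa_j)\sigma_j^4$) matches the paper's computation exactly.
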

\begin{proof}
For an additive function under winding stairs
\begin{align*}
\Delta_{d(i-1)+j}
&=g_{j}(\bsx_{d(i-1)+j,j})-g_{j}(\bsx_{d(i-2)+j,j})\\
&=g_{j}(z_{d(i-1)+j})-g_{j}(z_{d(i-2)+j})
\end{align*}
because $r(i,j) = d\lfloor (i-j)/d\rfloor+j$
yields $r(d(i-1)+j,j)=d(i-1)+j$.
It follows that $\check{\olt}^2_j$
and $\check{\olt}^2_k$ have no $z$'s in common
when $j\ne k$ and so they are independent.
Now define the independent and identically
distributed random variables
$Y_i=g_j(z_{d(i-1)+j})$ for $i=1,\dots,N$.
Then
\begin{align*}
\var(\check{\olt}^2_j)
&=\var\Bigl(\frac1{2N}\sum_{i=1}^N(Y_i-Y_{i-1})^2\Bigr)\\
&=\frac1{4N}\var( (Y_1-Y_0)^2)
+\frac{N-1}{2N^2}
\cov( (Y_1-Y_0)^2, (Y_2-Y_1)^2)\\
&= \frac{(8+2\kappa_j)\sigma^4}{4N}
+\frac{(N-1)(\kappa+2)\sigma^4}{2N^2}
\end{align*}
by Lemma~\ref{lem:mu4s},
establishing~\eqref{eq:varaddwindnum}.
For truncated winding
squares all of the $\Delta_i$ are independent in
the additive model establishing~\eqref{eq:varadddiswindnum}.
\end{proof}

Next we turn to the multiplicative model
$f_P(\bsx_i) = \prod_{j=1}^d g_j( z_{r(i,j)})$.
A key distinction arises for variables `between' the $j$'th
and $k$'th and variables that are not between those.
For $j<k$ the indices $t$ between them are
designated by $t\in(j,k)$ and the ones `outside' of them
are designated by $t\not\in[j,k]$, meaning that
$t\in \{1,\dots,j-1\}\cup\{k+1,\dots,d\}$.
Recall that $\mu_{\ell j}$ is $\e( g_j(x_j)^\ell)$ for $\ell=2,3,4$.

\begin{theorem}\label{thm:multiplicativecasewinding}
For the multiplicative function $f_P$ of~\eqref{eq:fprod},
\begin{align}
\label{eq:varmultdiswindnum}
\begin{split}
\var( \ddot\delta)
&=
\frac1{N}
\sum_{j=1}^d
\sigma_j^4\Bigl(\Bigl(3+\frac{\kappa_j}2\Bigr)\prod_{t\ne j}\mu_{4t}-\prod_{t\ne j}\mu_{2t}^2\Bigr)\Bigr)\\
&+\frac2{N}\sum_{j<k}
\biggl(\,
\frac{\eta_j\eta_k}4\prod_{t\in(j,k)}\mu_{2t}^2\prod_{t\not\in[j,k]}\mu_{4t}
-\sigma^2_j\sigma^2_k\mu_{2j}\mu_{2k}
\prod_{t\not\in\{j,k\}}\mu_{2t}^2
\biggr)
\end{split}
\end{align}
and
\begin{align}\label{eq:varmultwindnum}
\begin{split}
\var( \check\delta)
&=\var( \ddot\delta)
+\frac2{N}\sum_{j<k}
\biggl(\,
\frac{\eta_j\eta_k}4\prod_{t\in (j,k)}\mu_{4t}\prod_{t\not\in [j,k]}\mu_{2t}^2
-\sigma^2_j\sigma^2_k\mu_{2j}\mu_{2k}
\prod_{t\not\in j{:}k}\mu_{2t}^2
\biggr)
\end{split}
\end{align}
where $\eta_j =
\mu_{4j}-2\mu_j\mu_{3j}+\mu_{2j}^2$.
\end{theorem}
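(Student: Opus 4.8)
The plan is to reduce everything to the computation of four covariances of the form $\cov(\Delta_{d+j}^2,\Delta_{d+k}^2)$ and $\cov(\Delta_{2d+j}^2,\Delta_{d+k}^2)$, using the reductions~\eqref{eq:covoltwinding} already recorded in the appendix. For truncated winding stairs we have $\var(\ddot\delta)=\sum_j\var(\ddot\olt_j^2)+2\sum_{j<k}\cov(\ddot\olt_j^2,\ddot\olt_k^2)$, where the diagonal terms $\var(\ddot\olt_j^2)$ are the single-variable Jansen variances and equal $\frac1{4N}\var(\Delta_{d+j}^2)$, which by the product structure $\Delta_{d+j}=(g_j(z_{d+j})-g_j(z_j))\prod_{t\ne j}g_t(z_{r(d+j,t)})$ and Lemma~\ref{lem:mu4s} gives the first sum in~\eqref{eq:varmultdiswindnum} --- exactly $\var(\hat\delta)/1$ term by term, matching~\eqref{eq:varnaivprod}. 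The off-diagonal term needs $\cov(\Delta_{d+j}^2,\Delta_{d+k}^2)$, and the whole content of the theorem is in identifying which $z$'s are shared between $\bsx_{d+j},\bsx_{d+j-1},\bsx_{d+k},\bsx_{d+k-1}$.

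First I would read off the shared-component bookkeeping from~\eqref{eq:rofij}: for $j<k$ and the block depicted in~\eqref{eq:windingdiagram}, coordinate $t$ of $\bsx_{d+j}$ (and $\bsx_{d+j-1}$) versus coordinate $t$ of $\bsx_{d+k}$ (and $\bsx_{d+k-1}$) agree precisely when $t\not\in(j,k]$ for one pairing and $t\not\in(j,k)$ for the relevant comparison; the upshot is a trichotomy of coordinates: the two ``active'' coordinates $j$ and $k$, the coordinates $t\in(j,k)$ that are \emph{fresh between the two differences} (so they carry a $\mu_{2t}$ in $\Delta_{d+j}$, a \emph{different} $\mu_{2t}$ in $\Delta_{d+k}$ --- hence $\mu_{2t}^2$ after taking the expectation of the product), and the coordinates $t\not\in[j,k]$ that are \emph{common} to all four vectors (so they appear to the fourth power, giving $\mu_{4t}$). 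Carefully substituting $\Delta_{d+j}=(g_j(z_{d+j})-g_j(z_j))g_k(z_k)\prod_{t\ne j,k}g_t(\cdot)$ and $\Delta_{d+k}=(g_k(z_{d+k})-g_k(z_k))g_j(z_{d+j})\prod_{t\ne j,k}g_t(\cdot)$ into $\e(\Delta_{d+j}^2\Delta_{d+k}^2)$, and using independence across coordinates, factors the expectation into: the coordinate-$j$ factor $\e(g_j(x_j)^2(g_j(x_j)-g_j(z_j))^2)=\eta_j$, the coordinate-$k$ factor $\eta_k$ (by the mirror-image argument --- here is where one checks the ordering works out so that in both differences it is the \emph{same} shared $z_k$ that gets subtracted), the $\prod_{t\in(j,k)}\mu_{2t}^2$ from the ``fresh-between'' coordinates, and $\prod_{t\not\in[j,k]}\mu_{4t}$ from the common ones. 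Subtracting $\e(\Delta_{d+j}^2)\e(\Delta_{d+k}^2)=2\sigma_j^2\mu_{2j}\cdot 2\sigma_k^2\mu_{2k}\prod_{t\ne j}\mu_{2t}\prod_{t\ne k}\mu_{2t}$, after recombining the shared $\mu_{2j},\mu_{2k}$ powers, yields the bracketed off-diagonal term of~\eqref{eq:varmultdiswindnum}. I would double-check the degenerate cases $k=j+1$ (empty $(j,k)$) and $\{j,k\}$ adjacent to $1$ or $d$ against the empty-product convention stated before the theorem.

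For the regular winding stairs estimate $\check\delta$, by~\eqref{eq:covoltwinding} the extra term per pair $j<k$ is $\frac1{4N}\cov(\Delta_{2d+j}^2,\Delta_{d+k}^2)$, doubled in $2\sum_{j<k}$, and this is the difference $\var(\check\delta)-\var(\ddot\delta)$. The same coordinate-trichotomy analysis applies but with a \emph{shifted} overlap pattern: comparing $\bsx_{2d+j},\bsx_{2d+j-1}$ with $\bsx_{d+k},\bsx_{d+k-1}$, the ``fresh-between'' coordinates are now those $t$ with $t\in(j,k)$ --- wait, the roles swap, and it is the coordinates $t\not\in[j,k]$ that are freshly renewed between generations (giving $\mu_{2t}^2$) while the coordinates $t\in(j,k)$ are the ones held common (giving $\mu_{4t}$), which is exactly the interchange of $(j,k)$ and $\not\in[j,k]$ that distinguishes~\eqref{eq:varmultwindnum} from~\eqref{eq:varmultdiswindnum}. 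The coordinate-$j$ and coordinate-$k$ factors are again $\eta_j,\eta_k$ by the same one-dimensional identity, and subtracting the product of means gives the stated bracket. The main obstacle, and the step I would be most careful about, is the second-generation overlap bookkeeping for $\Delta_{2d+j}$ versus $\Delta_{d+k}$: getting right exactly which $z$-indices coincide (and in particular that there \emph{is} still a common $z_k$ so that the coordinate-$k$ factor collapses to $\eta_k$ rather than something involving two independent copies), because an off-by-one here in the floor function $r(i,j)$ would scramble the whole product structure. Everything after the overlap pattern is fixed is routine factoring of expectations over independent coordinates plus Lemma~\ref{lem:mu4s}.
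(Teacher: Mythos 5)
Your proposal follows essentially the same route as the paper's own proof: reduce everything via~\eqref{eq:covoltwinding} to $\cov(\Delta_{d+j}^2,\Delta_{d+k}^2)$ and $\cov(\Delta_{2d+j}^2,\Delta_{d+k}^2)$, track coordinate by coordinate which $z$'s are shared, and factor the fourth moments by independence into $\eta_j\eta_k$ times products of $\mu_{2t}^2$ and $\mu_{4t}$; your overlap trichotomy, and its swap between the truncated and regular cases, matches the paper's computation exactly. The one point worth flagging is that, exactly like the paper's proof, your argument for $\var(\check\delta)$ accounts only for the cross-variable covariances with $j<k$ and is silent on the within-variable lag-$d$ serial covariance $\cov\bigl(\Delta_{d(i-1)+j}^2,\Delta_{di+j}^2\bigr)=(\kappa_j+2)\sigma_j^4\prod_{t\ne j}\mu_{2t}^2$, which does not vanish in general and is the multiplicative analogue of the $\frac{N-1}{2N^2}\sum_j(\kappa_j+2)\sigma_j^4$ term that appears in the additive-case Theorem~\ref{thm:additivecasewinding}.
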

\begin{proof}
We use equation~\eqref{eq:covoltwinding} to write covariances
in terms of the first few $\bsx_i$.
For $1\le j\le d$ we have
$\Delta_{d+j}=
\prod_{t=1}^{j-1}g_t(z_{d+t})
\times\bigl( g_j(z_{d+j})-g_j(z_{d})\bigr)
\times\prod_{t=j+1}^dg_t(z_{t})$
so that
\begin{align*}
\e(\Delta_{d+j}^2) &= 2\sigma_j^2\prod_{t\ne j}\mu_{2t}\quad\text{and}\quad
\e(\Delta_{d+j}^4) = (12+2\kappa_j)\sigma^4_j\prod_{t\ne j}\mu_{4t}
\end{align*}
and $\var(\Delta^2_{d+j})=\eta_j\prod_{t\ne j}\mu_{4t} -4\sigma_j^4\prod_{t\ne j}\mu_{2t}^2$.
Then for $1\le j< k\le d$ and using a convention
that empty products are one,
\begin{align*}
\e( \Delta_{d+j}^2\Delta_{d+k}^2)
&=\prod_{t=1}^{j-1}\mu_{4t}
\times\eta_j\times\prod_{t=j+1}^{k-1}\mu_{2t}^2\times\eta_k\times\prod_{t=k+1}^d\mu_{4t}\quad\text{and}\\
\e( \Delta_{2d+j}^2\Delta_{d+k}^2)
&=\prod_{t=1}^{j-1}\mu_{2t}^2
\times\eta_j\times\prod_{t=j+1}^{k-1}\mu_{4t}\times\eta_k\times\prod_{t=k+1}^d\mu_{2t}^2.
\end{align*}
Therefore,
\begin{align*}
\cov(\Delta^2_{d+j},\Delta^2_{d+k})
&=
\eta_j\eta_k\prod_{t\in(j,k)}\mu_{2t}^2
\prod_{t\not\in[j,k]}\mu_{4t}
-4\sigma^2_j\sigma^2_k\mu_{2j}\mu_{2k}\prod_{t\not\in\{j,k\}}\mu_{2t}^2,\quad\text{and}\\
\cov(\Delta^2_{2d+j},\Delta^2_{d+k})
&=
\eta_j\eta_k\prod_{t\in(j,k)}\mu_{4t}\prod_{t\not\in[j,k]}\mu_{2t}^2
\prod_{t=1}^{j-1}\mu_{2t}^2
-4\sigma^2_j\sigma^2_k\mu_{2j}\mu_{2k}\prod_{t\not\in\{j,k\}}\mu_{2t}^2.
\end{align*}
Putting these together establishes the theorem.
\end{proof}

\end{document}